\documentclass[conference]{IEEEtran}
\IEEEoverridecommandlockouts
\usepackage{cite}
\usepackage{amsmath,amssymb,amsfonts,bm}
\usepackage{amsthm}
\usepackage{acronym}
\usepackage[ruled, linesnumbered]{algorithm2e}
\usepackage{sansmath} 
\usepackage{upgreek}
\usepackage{graphicx}
\usepackage{subcaption}
\usepackage{enumitem}
\usepackage{pstricks}
\usepackage{textcomp}
\usepackage{xcolor}
\usepackage{microtype}
\newcommand{\RED}[1]{{\color[rgb]{1.00,0.10,0.10} #1}} 
 
\renewcommand{\RED}[1]{#1}

\usepackage{float} 

\long\def\comment#1{}

\newcommand{\PP}{{\mathbb P}}

\newcommand{\EE}{{\mathbb E}}

\newcommand{\Fc}{{\mathcal F}}

\newcommand{\ssf}{{\mathsf s}}

\newcommand{\xsf}{{\mathsf x}}

\newcommand{\Asf}{{\mathsf A}}

\newcommand{\Hsf}{{\mathsf H}}

\newcommand{\Qsf}{{\mathsf Q}}
\newcommand{\Rsf}{{\mathsf R}}

\newcommand{\Tsf}{{\mathsf T}}
\newcommand{\Usf}{{\mathsf U}}

\newcommand{\Xsf}{{\mathsf X}}
\newcommand{\Ysf}{{\mathsf Y}}
\newcommand{\Zsf}{{\mathsf Z}}

\renewcommand{\det}{{\hbox{det}}}

\renewcommand{\arg}{{\hbox{arg}}}

\newcommand{\SNR}{{\mathsf {SNR}}}

\newcommand{\herm}{{\mathsf H}}
\newcommand{\trasp}{{\mathsf T}}

\def\E{{\mathbb E}}

\def\tr{{\rm tr}}

\newtheorem{thm}{Theorem}
\newtheorem{cor}{Corollary}
\newtheorem{lem}{Lemma}
\newtheorem{prop}{Proposition}
\newtheorem{rem}{Remark}

\newtheorem{defn}{\protect\definitionname}

\providecommand{\definitionname}{Definition}

\acrodef{aoa}[AOA]{angle-of-arrival}
\acrodef{bcrb}[BCRB]{Bayesian Cram\'{e}r-Rao bound}
\acrodef{bfim}[BFIM]{Bayesian Fisher information matrix}
\acrodef{bp}[BP]{belief propagation}
\acrodef{cdi}[CDI]{cooperative dilution intensity}
\acrodef{cir}[CIR]{channel impulse response}
\acrodef{cl}[CL]{cooperative localization}
\acrodef{cp}[CP]{cyclic prefix}
\acrodef{crb}[CRB]{Cram\'{e}r-Rao bound}
\acrodef{crlb}[CRLB]{Cram\'{e}r-Rao lower bound}
\acrodef{dft}[DFT]{discrete Fourier transform}
\acrodef{dof}[DoF]{degree of freedom}
\acrodef{dpeb}[DPEB]{directional position error bound}
\acrodef{fim}[FIM]{Fisher information matrix}
\acrodef{efim}[EFIM]{equivalent Fisher information matrix}
\acrodef{hogs}[HOGS]{hybrid orthogonal-Gaussian signalling}
\acrodef{hsug}[HSUG]{hybrid semi-unitary--Gaussian}
\acrodef{ici}[ICI]{information coupling intensity}
\acrodef{icrb}[ICRB]{inverse CRB}
\acrodef{iid}[i.i.d.]{independently and identically distributed}
\acrodef{im}[IM]{index modulation}
\acrodef{isac}[ISAC]{Integrated Sensing and Communication}
\acrodef{los}[LoS]{line-of-sight}
\acrodef{mse}[MSE]{mean-squared error}
\acrodef{ofdm}[OFDM]{orthogonal frequency-division multiplexing}
\acrodef{pdf}[PDF]{probability density function}
\acrodef{peb}[PEB]{position error bound}
\acrodef{speb}[SPEB]{squared position error bound}
\acrodef{pll}[PLL]{phase-locked loop}
\acrodef{psk}[PSK]{phase shift keying}
\acrodef{p2p}[P2P]{point-to-point}
\acrodef{qam}[QAM]{quadrature amplitude modulation}
\acrodef{rbs}[RBS]{reference broadcast synchronization}
\acrodef{rhs}[RHS]{right hand side}
\acrodef{rii}[RII]{ranging information intensity}
\acrodef{rss}[RSS]{received signal strength}
\acrodef{rc}[RC]{ranging coefficient}
\acrodef{speb}[SPEB]{squared position error bound}
\acrodef{toa}[TOA]{time-of-arrival}
\acrodef{tdoa}[TDOA]{time-difference-of-arrival}
\acrodef{tpsn}[TPSN]{time synchronization protocol for sensor network}
\acrodef{vmp}[VMP]{variational message passing}
\acrodef{wsn}[WSN]{wireless sensor network}
\acrodef{efim}[EFIM]{equivalent Fisher information matrix}
\acrodef{dio}[DIO]{distance-information-only}
\acrodef{aio}[AIO]{angle-information-only}
\acrodef{saaf}[SAAF]{squared array aperture function}
\acrodef{snc}[S\&C]{sensing and communications}
\acrodef{uoa}[UOA]{uniformly oriented array}
\acrodef{rgg}[RGG]{random geometric graph}
\acrodef{snr}[SNR]{signal-to-noise ratio}
\acrodef{eoc}[EoC]{efficiency of cooperation}
\acrodef{npi}[NPI]{nominal position information}
\acrodef{gnss}[GNSS]{global navigation satellite system}
\acrodef{mimo}[MIMO]{multiple-input multiple-output}
\acrodef{mcs}[MCS]{minimally constrained system}
\acrodef{zzb}[ZZB]{Ziv-Zakai lower bound}
\acrodef{wwb}[WWB]{Weiss-Weinstein lower bound}
\acrodef{nlos}[NLOS]{non-light-of-sight}
\acrodef{mmse}[MMSE]{minimum mean squared error}
\acrodef{emmse}[EMMSE]{ergodic minimum mean squared error}

\acrodef{lmmse}[LMMSE]{linear minimum mean squared error}
\acrodef{uav}[UAV]{unmanned aerial vehicle}
\acrodef{ppp}[PPP]{Poisson point process}
\acrodef{bpp}[BPP]{binomial point process}
\acrodef{cln}[CLN]{cooperative location-aware network}
\acrodef{pdr}[PDR]{pedestrian dead reckoning}
\acrodef{ml}[ML]{maximum likelihood}
\acrodef{map}[MAP]{maximum \textit{a posteriori}}
\acrodef{kkt}[KKT]{Karush-Kuhn-Tucker}
\acrodef{st}[ST]{subspace tradeoff}
\acrodef{drt}[DRT]{deterministic-random tradeoff}
\acrodef{ustm}[USTM]{unitary space-time modulation}
\acrodef{tir}[TIR]{target impulse response}
\acrodef{pga}[PGA]{projected gradient algorithm}

\acrodefplural{dof}[DoFs]{degrees of freedom}
\acrodefplural{drt}[DRTs]{deterministic-random tradeoffs}

\acrodef{id}[i.d.]{isotropically distributed}

\acrodef{csi}[CSI]{channel state information}
  
\acrodef{mi}[MI]{mutual information}
\def\EMMSE{\mbox{\scriptsize\sf EMMSE}}
\def\BibTeX{{\rm B\kern-.05em{\sc i\kern-.025em b}\kern-.08em
    T\kern-.1667em\lower.7ex\hbox{E}\kern-.125emX}}
\begin{document}
\setlength{\abovedisplayskip}{3pt}
\setlength{\belowdisplayskip}{3pt}
\setlength{\abovedisplayshortskip}{1pt}
\setlength{\belowdisplayshortskip}{1pt}
\title{Noncoherent ISAC over Block-Fading Channels: Asymptotic Performance Analysis}

\author{
\IEEEauthorblockN{%
Hao~Yang\IEEEauthorrefmark{1},
Kai~Wan\IEEEauthorrefmark{1},
Giuseppe~Caire\IEEEauthorrefmark{2}
}
\IEEEauthorblockA{\IEEEauthorrefmark{1}Huazhong University of Science and Technology, 430074  Wuhan, China, \{hao\_yang,kai\_wan\}@hust.edu.cn}%
\IEEEauthorblockA{\IEEEauthorrefmark{2}Technische Universit\"at Berlin, 10587 Berlin, Germany, caire@tu-berlin.de}%
}

\maketitle

\begin{abstract}
This paper investigates the fundamental limits and optimal signal distribution design for Integrated Sensing and Communication (ISAC) systems operating under strictly noncoherent conditions. Unlike conventional coherent frameworks that rely on perfect channel state information, we consider a block-fading MIMO channel where the channel realizations are unknown to both the transmitter and the receiver. We adopt a realization-wise perspective to characterize the noncoherent performance tradeoff across different signal-to-noise ratio (SNR) regimes. In the high-SNR regime, we derive a lower bound for the noncoherent mutual information and define a metric, termed sensing-induced rate loss, to quantify the communication penalty incurred by sensing-oriented beamforming. We then employ a projected gradient algorithm to optimize the spatial power allocation, balancing the conflict between the unitary space-time modulation-based structure for communication and the task-oriented spatial power allocation for sensing. Conversely, in the low-SNR regime, we perform a first-order asymptotic analysis of the ergodic minimum mean squared error (EMMSE). Our theoretical derivation reveals a fundamental synergy: the sensing-optimal strategy collapses to a rank-one transmission along the dominant eigenvector of the target response, which incurs no first-order communication loss in the low-SNR regime. This result demonstrates that the conflicting tradeoff observed at high SNR vanishes asymptotically at low SNR, enabling perfect alignment between sensing and communication objectives.
\end{abstract}

\begin{IEEEkeywords}
Integrated sensing and communication, noncoherent communication, block-fading, EMMSE.
\end{IEEEkeywords}

\acresetall
\section{Introduction}

The paradigm of \ac{isac} has emerged as a transformative architecture for future wireless networks, predicated on the synergistic integration of hardware and spectral resources. To date, the majority of information-theoretic milestones in \ac{isac} have been established under the assumption of a coherent regime, where perfect \ac{csi} is presumed to be available at both the transmitter and receiver~\cite{xiongFundamentalTradeoffIntegrated2023, ahmadipourInformationtheoreticApproachJoint2024, yilmazJointCommunicationParameter2026}. Within this coherent framework, extensive research has been devoted to characterizing the Pareto boundary between communication mutual information and sensing performance, quantified via \ac{bcrb}-based metrics (e.g.,~\cite{xiongFundamentalTradeoffIntegrated2023, guoFundamentalLimitsISAC2025}) or \ac{mmse}-type metrics~\cite{wangFundamentalMMSEratePerformance2025,shenFundamentalTradeoffBistatic2025}, while alternative formulations based on sensing mutual information have also been proposed~\cite{yuRethinkingFundamentalPerformance2025}. However, these idealized gains often hinge on the implicit assumption that the resources dedicated to channel training and feedback are negligible---a premise that inevitably collapses in highly dynamic or power-constrained environments. Consequently, investigating \ac{isac} systems under strictly noncoherent conditions---where instantaneous channel realizations remain unknown to all nodes---is of paramount importance for characterizing the fundamental limits of practical deployments.

Building upon these coherent results, recent literature has shifted toward optimizing \ac{isac} systems under more realistic \ac{csi} configurations and signaling constraints. For instance, the authors in~\cite{kobayashiJointStateSensing2019} showed that in a state-dependent multi-access channel (MAC) with generalized feedback, the lack of instantaneous \ac{csi} at the transmitters requires carefully designed signaling strategies, even when the receiver observes the channel outputs perfectly. Furthermore, investigations into signal shaping~\cite{liuDeterministicrandomTradeoffIntegrated2023, duReshapingISACTradeoff2024} have exposed a critical divergence between deterministic and random signaling: while Gaussian stochasticity is conventionally optimal for communication capacity, its inherent uncertainty may compromise sensing resolution. This tension necessitates a judicious balance in the design of the underlying signal structure.

At a more fundamental level, the information-theoretic framework for noncoherent joint transmission and state sensing was pioneered by Zhang et {\it al.}~\cite{zhangJointTransmissionState2011}, who conceptualized the sensing task through the lens of a state-dependent memoryless channel. By assuming the state is unknown to both terminal nodes, they characterized the tradeoff between the achievable communication rate and state estimation distortion. This framework was subsequently refined by Choudhuri {\it et al.}~\cite{choudhuriCausalStateCommunication2013}, who incorporated causal and strictly causal state information at the transmitter, thereby establishing the analytical basis for evaluating how asymmetric environmental knowledge dictates system performance bounds. Related formulations consider fixed but unknown channel states and use a detection-error exponent as the sensing metric; see, e.g., Chang et {\it al.}~\cite{changRateDetectionerrorExponent2023}, who characterized the rate--detection-exponent tradeoff for joint communication and sensing of discrete channel states.

Despite these advances, the optimal design for strictly noncoherent \ac{isac} remains under-explored, particularly regarding the influence of the \ac{snr} regime. In the high-\ac{snr} regime, the noncoherent channel is primarily degree of freedom limited; here, enforcing a sensing-optimal covariance structure creates a structural mismatch with the capacity-achieving \ac{ustm} profile characterized by~\cite{hochwald_unitary_2000, lizhong_zheng_communication_2002}, leading to a quantifiable capacity penalty. In stark contrast, the low-\ac{snr} regime is power-limited \cite{verdu_spectral_2002}, where the stringent requirements for spatially uniform signaling relax as the capacity scales linearly with the total received power. This observation raises the question of whether the \ac{snc} tradeoff persists across all power regimes. Specifically, it remains to be determined whether the structural conflict observed at high \ac{snr} is fundamental, or if the two objectives asymptotically align in the low-\ac{snr} regime, enabling synergistic operation.

Motivated by these considerations, this paper investigates the optimization of \ac{isac} systems operating under strictly noncoherent conditions. We consider a block-fading MIMO Gaussian channel, where the channel matrix remains constant over a coherence interval of $T$ symbols but its realization is unknown to all nodes. Our contributions are three-fold:

    $\bullet$ \textit{Optimal signaling structure:} We derive the fundamental signal structure for noncoherent \ac{isac} systems. We prove that the optimal sensing strategy---a deterministic sample covariance aligned with the target's eigen-structure---admits an \textit{effective communication input} representation. By exploiting the unitary invariance of isotropic block-fading channels, we demonstrate that the spatial rotation required for sensing is transparent to the communication receiver. This structural insight rigorously reduces the joint design from complex matrix optimization to  tractable power allocation  over spatial eigenmodes.

    $\bullet$ \textit{High-SNR tradeoff and rate loss quantification:} We derive a high-SNR lower bound on the noncoherent mutual information for block-fading MIMO channels. Based on this bound, we introduce the \textit{sensing-induced rate loss} to quantify the penalty incurred by deviating from equal-power signaling (i.e., USTM) to satisfy sensing-specific requirements. Consequently, we formulate a convex optimization problem for spatial power allocation to trace the system's Pareto frontier.
    
    $\bullet$ \textit{Low-SNR asymptotic alignment:} We prove analytically that the sensing-optimal strategy collapses to a rank-one beamforming along the dominant eigenmode of the target. Crucially, our analysis reveals that this highly structured transmission incurs no penalty in the first-order communication rate, as the noncoherent capacity becomes insensitive to the spatial covariance structure in this regime. This highlights an interesting phenomenon: the conflicting tradeoff observed at high SNR vanishes asymptotically at low SNR, allowing for perfect alignment of \ac{snc}  objectives.

\textit{Notations:}
$\xsf$, $\bm{\xsf}$, and $\bm{\Xsf}$ represent random variables, random vectors, and random matrices, respectively, with their deterministic counterparts denoted by $x$, $\bm{x}$, and $\bm{X}$. 
The $M \times M$ identity matrix is denoted by $\bm{I}_{M}$. 
$\E[\cdot]$ and $\PP[\cdot]$ denote the expectation and probability operators, respectively. 
The transpose and Hermitian transpose are denoted by $[\cdot]^{\Tsf}$ and $[\cdot]^{\mathsf{H}}$, respectively. 
$[\bm{X}]_{ij}$ denotes the $(i,j)$-th entry of matrix $\bm{X}$. 
$\mathrm{diag}(\bm{x})$ denotes a diagonal matrix with the elements of $\bm{x}$ on its main diagonal, and $\mathrm{tr}(\cdot)$ denotes the trace of a matrix.
$\bm{A} \succeq \bm{B}$ indicates that $\bm{A}-\bm{B}$ is positive semidefinite. 

\section{System Model and Preliminaries}
\label{sec:model}
\subsection{System Model}
Let us consider the general \ac{isac} system model given by
\begin{align}
\bm{\Ysf}_i = \bm{\Hsf}_i \bm{\Xsf} + \bm{\Zsf}_i,\quad i\in\{\mathrm{c},\mathrm{s}\},
\end{align}
where $\bm{\Hsf}_{\rm c}\in\mathbb{C}^{N_{\rm c}\times M}$ and $\bm{\Hsf}_{\rm s}\in\mathbb{C}^{N_{\rm s}\times M}$ denote the communication channel and the target response matrix, respectively, \RED{with $M$, $N_{\rm c}$, and $N_{\rm s}$ being the number of antennas at the \ac{isac} transmitter (Tx), the communication receiver (Rx1), and the sensing receiver (Rx2), respectively. $\bm{\Ysf}_{\rm c}\in\mathbb{C}^{N_{\rm c}\times T}$ and $\bm{\Ysf}_{\rm s}\in\mathbb{C}^{N_{\rm s}\times T}$ denote the received communication and sensing signals over a coherence block of length $T$, respectively.} $\bm{\Xsf}\in\mathbb{C}^{M\times T}$ denotes the transmitted dual-functional waveform for performing both \ac{snc} tasks. In this paper, we assume that $\bm{\Xsf}$ is known to both the Tx and Rx2 \RED{(e.g., as is typical in monostatic sensing setups or bistatic systems with perfect backhaul),} but is unknown to Rx1. 
Neither the Tx nor Rx1 has access to realizations of $\bm{\Hsf}_{\rm c}$, and each element of $\bm{\Hsf}_{\rm c}$, $h_{ij}$ for $i = 1, \dots, N_{\rm c}$ and $j = 1, \dots, M$, is assumed to be \ac{iid} zero-mean circularly symmetric complex Gaussian with unit variance, i.e., $h_{ij} \sim \mathcal{CN}(0,1)$. The noise matrices $\bm{\Zsf}_{\rm c}$ and $\bm{\Zsf}_{\rm s}$ have \ac{iid} $\mathcal{CN}(0,\sigma_{\rm c}^2)$ and $\mathcal{CN}(0,\sigma_{\rm s}^2)$ entries, respectively.
The communication subsystem aims at transmitting as much information as possible (reliably) to Rx1, while the sensing subsystem aims at estimating $\bm{\Hsf}_{\rm s}$ at Rx2.

We consider a block-fading model for both the sensing channel and the communication channel. Specifically, we assume that $\bm{\Hsf}_{\rm s}$ varies every $T$ symbols in an \ac{iid} manner, following a statistical correlation matrix \(\bm{R}_{{\bm \Hsf}_{\rm s}} = \EE[\bm \Hsf_{\rm s}^{\herm}{\bm \Hsf}_{\rm s}]\), and that the communication channel $\bm{\Hsf}_{\rm c}$ also varies every $T$ symbols in an \ac{iid} manner. We will refer to $T$ as the \emph{coherence time}\footnote{This identical coherence time assumption is standard in information-theoretic ISAC studies (cf.~\cite{xiongFundamentalTradeoffIntegrated2023, ahmadipourInformationtheoreticApproachJoint2024, yilmazJointCommunicationParameter2026}) to model scenarios where targets and users share similar mobilities. Extending to dual-timescale models ($T_{\rm c} \neq T_{\rm s}$) breaks the required unitary invariance and is left for future work.} in the rest of the paper. In this strictly noncoherent setting, no terminal has access to the instantaneous realizations of $\bm{\Hsf}_{\rm c}$ or $\bm{\Hsf}_{\rm s}$; only their distributions and, for sensing, the second-order statistics $\bm{R}_{{\bm \Hsf}_{\rm s}}$ are assumed known.

 The average transmit power at each transmit antenna in one symbol period is normalized to $1$, and thus the power constraint can be written as
\begin{equation}\label{power_constraint}
 \E\bigl[\tr({\bm{\Xsf\Xsf^\herm}})\bigr] = MT.
\end{equation}
We refer to the \ac{snr} as the average \ac{snr} at each receive antenna. Under the aforementioned normalization, the communication \ac{snr} and sensing \ac{snr} are respectively defined as
\begin{equation}
    \SNR_{\rm c} \triangleq \frac{M}{\sigma_{\rm c}^2} \quad \text{and} \quad {\SNR}_{\rm s} \triangleq \frac{M}{\sigma_{\rm s}^2}.
\end{equation}

The performance of the communication subsystem is conventionally characterized by 
\begin{equation}
C = \sup_{p_{\bm{\Xsf}}(\bm{X})} ~T^{-1} I(\bm{\Ysf}_{\rm c};\bm{\Xsf}), \quad {\rm s.t.}~p_{\bm{\Xsf}}(\bm{X})\in\Fc,
\end{equation}
where $I(\bm{\Ysf}_{\rm c};\bm{\Xsf})$ denotes the mutual information between $\bm{\Ysf}_{\rm c}$ and $\bm{\Xsf}$, and $\Fc$ denotes the feasibility region of $p_{\bm{\Xsf}}(\bm{X})$ determined by  the power constraint \eqref{power_constraint}.

We utilize the ergodic \ac{mmse} (EMMSE)\footnote{We adopt EMMSE because it leads to a more tractable optimization framework and remains a tighter performance indicator in the low-SNR noncoherent regime, where derivative-based bounds like the \ac{bcrb} often become overly optimistic or difficult to compute.} to quantify sensing performance \cite{luRandomISACSignals2024}. This metric is formulated based on the linear \ac{mmse} estimator, under the assumption that channel statistics are known.
The ergodic estimation error can be computed as
\begin{subequations}\label{def:emmse}
\begin{align}
J_{\EMMSE}&\triangleq \E_{\bm{\Xsf}}\left[\tr \left( (\bm{R}_{\bm{\Hsf}_{\rm s}}^{-1}+\kappa\bm{\Xsf} \bm{\Xsf}^\herm)^{-1}\right) \right] \\
&= \E_{\bm{\Rsf}_{\bm{\Xsf}}}\left[\tr \left( (\bm{R}_{\bm{\Hsf}_{\rm s}}^{-1}+\kappa T\bm{\Rsf}_{\bm{\Xsf}})^{-1} \right) \right], 
\end{align}
\end{subequations}
where $\kappa={1}/{\sigma_{{\rm s}}^2 N_{\rm s} }={\SNR_{\rm s}}/M{N_{\rm s}}$ and $\bm{\Rsf}_{\bm{\Xsf}}=T^{-1}\bm{\Xsf} \bm{\Xsf}^\herm$ is the sample covariance matrix, and $\bm{R}_{\bm{\Hsf}_{\rm s}}$ is assumed to be full-rank, so that $\bm{R}_{\bm{\Hsf}_{\rm s}}^{-1}$ exists.
\subsection{Preliminary Results on Noncoherent Communication }
For noncoherent communication models where the fading coefficients are unavailable at both the receiver and the transmitter (i.e., by removing the sensing task from our considered problem), the channel capacity has been characterized in \cite{marzettaCapacityMobileMultipleantenna1999,lizhong_zheng_communication_2002}. We recall the main structural results as follows.

\begin{defn}[\cite{lizhong_zheng_communication_2002}]
A random matrix $\bm{\Rsf} \in \mathbb{C}^{M \times T}$, for $T \geq M$, is called \ac{id} if 
\(
p(\bm{\Rsf}) = p(\bm{\Rsf} \bm{U})
\)
for any deterministic $T \times T$ unitary matrix $\bm{U}$.
\end{defn}

\begin{lem}[\cite{lizhong_zheng_communication_2002}]
\label{id}
Let $\bm{\Hsf}$ be \ac{id}, and let $\bm{\Usf}$ be a random unitary matrix independent of $\bm{\Hsf}$. Then, for any realization $\bm{\Usf} = \bm{U}$, the distribution of $\bm{\Hsf}\bm{U}$ is identical to that of $\bm{\Hsf}$. Accordingly, $\bm{\Hsf}\bm{\Usf}$ and $\bm{\Usf}$ are statistically independent.
\end{lem}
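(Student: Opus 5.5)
The plan is a direct argument from the definition of isotropic distribution combined with conditioning on $\bm{\Usf}$. The statement has two parts. The first is that $\bm{\Hsf}\bm{U}$ has the same law as $\bm{\Hsf}$ for every fixed unitary $\bm{U}$. The second is that $\bm{\Hsf}\bm{\Usf}$ and $\bm{\Usf}$ are statistically independent. I would prove the first and then use it as the key ingredient for the second.

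For the first claim, fix a deterministic $T\times T$ unitary $\bm{U}$. The map $\bm{H}\mapsto\bm{H}\bm{U}$ is a linear bijection on $\mathbb{C}^{M\times T}$ whose inverse is $\bm{H}\mapsto\bm{H}\bm{U}^\herm$. Viewed as a real-linear map on $\mathbb{R}^{2MT}$, it is orthogonal, since it preserves the Frobenius norm $\tr(\bm{H}\bm{U}\bm{U}^\herm\bm{H}^\herm)=\tr(\bm{H}\bm{H}^\herm)$. Its Jacobian therefore has modulus one, and the density of $\bm{\Hsf}\bm{U}$ at $\bm{H}$ equals $p(\bm{H}\bm{U}^\herm)$. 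By the definition of isotropic distribution applied with the unitary matrix $\bm{U}^\herm$, this equals $p(\bm{H})$. If one prefers to avoid densities, the same conclusion follows by reading the definition as invariance of the law under right multiplication by unitaries, $\PP[\bm{\Hsf}\bm{U}\in\mathcal{A}]=\PP[\bm{\Hsf}\in\mathcal{A}]$ for measurable sets $\mathcal{A}$.

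For independence, let $\mathcal{A}$ be a measurable set of $M\times T$ matrices and $\mathcal{B}$ a measurable set of unitary matrices. Using the independence of $\bm{\Hsf}$ and $\bm{\Usf}$, and writing $\mu_{\bm{\Usf}}$ for the law of $\bm{\Usf}$, I compute
\begin{align*}
\PP[\bm{\Hsf}\bm{\Usf}\in\mathcal{A},\,\bm{\Usf}\in\mathcal{B}]
&=\int_{\mathcal{B}}\PP[\bm{\Hsf}\bm{U}\in\mathcal{A}\mid\bm{\Usf}=\bm{U}]\,d\mu_{\bm{\Usf}}(\bm{U})\\
&=\int_{\mathcal{B}}\PP[\bm{\Hsf}\bm{U}\in\mathcal{A}]\,d\mu_{\bm{\Usf}}(\bm{U})\\
&=\PP[\bm{\Hsf}\in\mathcal{A}]\,\PP[\bm{\Usf}\in\mathcal{B}].
\end{align*}
The second equality uses independence to drop the conditioning, and the last uses the first claim. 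Taking $\mathcal{B}$ to be the whole unitary group shows that $\bm{\Hsf}\bm{\Usf}$ has the same law as $\bm{\Hsf}$. The joint probability therefore factors as $\PP[\bm{\Hsf}\bm{\Usf}\in\mathcal{A}]\,\PP[\bm{\Usf}\in\mathcal{B}]$, which is exactly independence.

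The only delicate step is justifying the conditioning identity, namely that $\PP[\bm{\Hsf}\bm{\Usf}\in\mathcal{A}\mid\bm{\Usf}=\bm{U}]=\PP[\bm{\Hsf}\bm{U}\in\mathcal{A}]$. I would establish this by a Fubini argument on the product measure of $(\bm{\Hsf},\bm{\Usf})$. The function $(\bm{H},\bm{U})\mapsto\mathbf{1}\{\bm{H}\bm{U}\in\mathcal{A}\}$ is jointly measurable because matrix multiplication is continuous, so Fubini applies. This avoids any appeal to regular conditional distributions.
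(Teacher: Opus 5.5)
Your proposal is correct and is essentially the argument the paper relies on; the paper cites the lemma from Zheng--Tse without proof, and the lemma's own wording encodes the reasoning. For each fixed $\bm{U}$, rotational invariance gives $\bm{\Hsf}\bm{U}\overset{\rm d}{=}\bm{\Hsf}$, so the conditional law of $\bm{\Hsf}\bm{\Usf}$ given $\bm{\Usf}=\bm{U}$ does not depend on $\bm{U}$, which is exactly independence. Your Fubini computation on the product measure makes that conditioning step rigorous, and reading the definition as invariance in law rather than of a density is the right choice, since i.d.\ matrices such as the semi-unitary $\bm{\Qsf}$ have no Lebesgue density.
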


\begin{lem}[\cite{marzettaCapacityMobileMultipleantenna1999}]
\label{AQ}
The input distribution that achieves capacity can be written as 
\(
\bm{\Xsf} = \bm{\Asf} \bm{\Qsf},
\)
where $\bm{\Qsf}$ is an $M \times T$ \ac{id} semi-unitary matrix, i.e., $\bm{\Qsf} \bm{\Qsf}^{\herm} = \bm{I}_M$, and $\bm{\Asf}$ is an $M \times M$ real diagonal random matrix. Moreover, $\bm{\Asf}$ and $\bm{\Qsf}$ are independent.
\end{lem}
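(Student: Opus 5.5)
The plan is to reproduce the symmetry argument of \cite{marzettaCapacityMobileMultipleantenna1999} and \cite{lizhong_zheng_communication_2002}. The steps are: show that the mutual information is invariant under right multiplication of the input by a unitary matrix; symmetrize the input with an independent isotropic unitary matrix without loss; then read off the claimed factorization from a polar/SVD decomposition of the symmetrized input.

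\emph{Step 1 (invariance).} Given $\bm{\Xsf}$, the output $\bm{\Ysf}_{\rm c}=\bm{\Hsf}_{\rm c}\bm{\Xsf}+\bm{\Zsf}_{\rm c}$ has i.i.d. columns across the $N_{\rm c}$ rows. Each row is $\mathcal{CN}(\bm{0},\sigma_{\rm c}^2\bm{I}_T+\bm{\Xsf}^{\herm}\bm{\Xsf})$, so the conditional density depends on $\bm{\Xsf}$ only through $\bm{\Xsf}^{\herm}\bm{\Xsf}$. Replacing $\bm{\Xsf}$ by $\bm{\Xsf}\bm{U}$ for a deterministic $T\times T$ unitary $\bm{U}$ gives $p(\bm{Y}\bm{U}\mid\bm{X}\bm{U})=p(\bm{Y}\mid\bm{X})$. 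Since $|\det\bm{U}|=1$, the Jacobian of $\bm{Y}\mapsto\bm{Y}\bm{U}$ is trivial, and hence $I(\bm{\Ysf}_{\rm c};\bm{\Xsf}\bm{U})=I(\bm{\Ysf}_{\rm c};\bm{\Xsf})$. The transform also preserves $\tr(\bm{\Xsf}\bm{\Xsf}^{\herm})$, so the power constraint \eqref{power_constraint} is unaffected.

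\emph{Step 2 (symmetrization).} Let $\bm{\Theta}$ be Haar-distributed on the $T\times T$ unitary group and independent of $\bm{\Xsf}$, and set $\tilde{\bm{\Xsf}}=\bm{\Xsf}\bm{\Theta}$. The map $p_{\bm{\Xsf}}\mapsto I(\bm{\Ysf}_{\rm c};\bm{\Xsf})$ is concave in the input law, because the channel is fixed. The law of $\tilde{\bm{\Xsf}}$ is the mixture $\int p_{\bm{\Xsf}\bm{U}}\,d\mu(\bm{U})$. Jensen's inequality together with Step~1 therefore gives $I(\bm{\Ysf}_{\rm c};\tilde{\bm{\Xsf}})\ge I(\bm{\Ysf}_{\rm c};\bm{\Xsf})$, and the power is unchanged. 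Hence a capacity-achieving input may be taken \ac{id}. Existence of an optimizer I take from the cited capacity literature; alternatively, one can run the argument along a maximizing sequence.

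\emph{Step 3 (factorization).} For $T\ge M$, write the SVD $\bm{\Xsf}=\bm{\Vsf}\bm{\Sigma}\bm{\Wsf}$, with $\bm{\Vsf}$ an $M\times M$ unitary, $\bm{\Sigma}$ real nonnegative diagonal, and $\bm{\Wsf}$ an $M\times T$ semi-unitary. Then $\tilde{\bm{\Xsf}}=\bm{\Vsf}\bm{\Sigma}(\bm{\Wsf}\bm{\Theta})$. By the Haar property, conditioned on $(\bm{\Vsf},\bm{\Sigma},\bm{\Wsf})$, the matrix $\bm{\Qsf}:=\bm{\Wsf}\bm{\Theta}$ is an \ac{id} semi-unitary matrix with a law that does not depend on the conditioning. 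It is therefore independent of $(\bm{\Vsf},\bm{\Sigma})$, which is the content of Lemma~\ref{id}.

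It remains to absorb $\bm{\Vsf}$. Since $\bm{\Hsf}_{\rm c}$ has i.i.d. $\mathcal{CN}(0,1)$ entries, $\bm{\Hsf}_{\rm c}\bm{V}$ has the same law as $\bm{\Hsf}_{\rm c}$ for any unitary $\bm{V}$. Applying the analogous concavity argument on the left, I randomize $\bm{\Vsf}$ by an independent Haar $M\times M$ unitary $\bm{\Psi}$ and then note that left unitaries are invisible to the receiver. Concretely, I will show $I(\bm{\Ysf}_{\rm c};\bm{\Vsf}\bm{\Sigma}\bm{\Qsf})=I(\bm{\Ysf}_{\rm c};\bm{\Sigma}\bm{\Qsf})$: by Step~1 the conditional law $p(\bm{Y}\mid\bm{X})$ depends only on $\bm{X}^{\herm}\bm{X}=\bm{\Qsf}^{\herm}\bm{\Sigma}^2\bm{\Qsf}$, which does not involve $\bm{\Vsf}$. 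Both the output distribution and the conditional entropy are thus unchanged when $\bm{\Vsf}$ is dropped. Taking $\bm{\Asf}=\bm{\Sigma}$, possibly after a random permutation that preserves independence, gives $\bm{\Xsf}=\bm{\Asf}\bm{\Qsf}$ with $\bm{\Asf}$ real diagonal and independent of $\bm{\Qsf}$.

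\emph{Main obstacle.} The delicate step is the independence claim in Step~3. The SVD is not unique, so a measurable selection of $(\bm{\Vsf},\bm{\Sigma},\bm{\Wsf})$ is needed. One must also verify that $\bm{\Wsf}\bm{\Theta}$ is Haar on the Stiefel manifold independently of $\bm{\Wsf}$, which requires left-invariance of Haar measure on $U(T)$ acting on semi-unitaries. I would handle this by extending $\bm{\Wsf}$ to a full $T\times T$ unitary via a measurable completion. Then $\bm{\Qsf}$ consists of the first $M$ rows of a Haar unitary, since a deterministic unitary times a Haar unitary is again Haar.
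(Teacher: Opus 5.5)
Your proposal is correct and is essentially the standard Marzetta--Hochwald symmetrization argument, which the paper invokes only by citation and does not reprove. The steps are the same: $p(\bm{Y}\mid\bm{X})$ depends only on $\bm{X}^{\herm}\bm{X}$; averaging over an independent isotropic $T\times T$ unitary cannot decrease the mutual information, by concavity; and the SVD then splits off an independent isotropic semi-unitary factor, while the left $M\times M$ unitary is invisible to the receiver. The extra Haar randomization by $\bm{\Psi}$ and the random permutation are redundant, since your direct observation that $\bm{X}^{\herm}\bm{X}=\bm{\Qsf}^{\herm}\bm{\Sigma}^2\bm{\Qsf}$ already eliminates the left unitary.
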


This characterization reduces the optimization dimensionality from $MT$ to $M$. 
At high SNR, the equal-power input \(\PP \left[\bm{\Asf} = \sqrt{T} \, \bm{I}_M \right] = 1\) is asymptotically optimal, yielding explicit expressions for the high-SNR capacity derived in \cite{lizhong_zheng_communication_2002}.

\begin{lem}[\cite{lizhong_zheng_communication_2002}]
\label{cpure}
For the noncoherent MIMO channel with $M \le N_{\rm c}$ and $T \ge N_{\rm c}+M$, the high-SNR capacity is
\begin{equation}
    C_{\rm pure} = M\left(1 - \frac{M}{T}\right)\log_2 \SNR_{\rm c} + c + o(1),
\end{equation}
where
\begin{equation*}
    \begin{split}
        c =& \frac{1}{T} \log_2 |G(T,M)| + M\left(1 - \frac{M}{T}\right)\log_2 \frac{T}{M\pi e} \\
          &\ + \left(1 - \frac{M}{T}\right) \E[\log_2 \operatorname{det}\bm{\Hsf}_{\rm c}^\herm\bm{\Hsf}_{\rm c}],
    \end{split}
\end{equation*}
with \RED{$|G(T,M)|$ denoting the volume of the complex Grassmann manifold $G(T,M)$.}
\end{lem}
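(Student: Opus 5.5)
The statement is Lemma~\ref{cpure}, the Zheng--Tse high-SNR capacity expansion. I would prove it by matching a lower bound and an upper bound. Both are built on the structural reduction of Lemma~\ref{AQ}: the input is taken as $\bm{\Xsf}=\bm{\Asf}\bm{\Qsf}$ with $\bm{\Qsf}$ \ac{id} semi-unitary. Under this structure the $T\times T$ received block $\bm{\Ysf}_{\rm c}$ is itself \ac{id} in the row-space sense. This lets me write $\bm{\Ysf}_{\rm c}$ in a "polar" or Grassmannian coordinate system: its row space is a point in $G(T,M)$ (asymptotically), together with an $N_{\rm c}\times M$ coefficient matrix. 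The Jacobian of this change of variables is what produces the $|G(T,M)|$ term in $c$.

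\emph{Lower bound.} I fix the equal-power input $\bm{\Asf}=\sqrt{T}\bm{I}_M$, so $\bm{\Xsf}=\sqrt{T}\bm{\Qsf}$, and expand $I(\bm{\Ysf}_{\rm c};\bm{\Xsf})=h(\bm{\Ysf}_{\rm c})-h(\bm{\Ysf}_{\rm c}\mid\bm{\Xsf})$.

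The conditional term is exact. Given $\bm{\Xsf}$, each row of $\bm{\Ysf}_{\rm c}$ is Gaussian with covariance $T\bm{Q}^\herm\bm{Q}+\sigma_{\rm c}^2\bm{I}_T$. Hence
\begin{equation*}
h(\bm{\Ysf}_{\rm c}\mid\bm{\Xsf}) = N_{\rm c}\bigl[M\log(\pi e(T+\sigma_{\rm c}^2))+(T-M)\log(\pi e\sigma_{\rm c}^2)\bigr].
\end{equation*}

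For $h(\bm{\Ysf}_{\rm c})$ I decompose $\bm{\Ysf}_{\rm c}^\herm$ onto the random $M$-dimensional signal subspace and its complement.
\begin{itemize}
\item The component in the orthogonal complement is noise. It contributes $N_{\rm c}(T-M)\log(\pi e\sigma_{\rm c}^2)$ to leading order, plus the Grassmann volume and the change-of-measure Jacobian.
\item The in-subspace component is $\sqrt{T}\bm{\Hsf}_{\rm c}$ plus $O(\sigma_{\rm c})$ noise. It contributes $h(\sqrt{T}\bm{\Hsf}_{\rm c})$, and the Jacobian adds $(T-N_{\rm c})\,\E\log\det\bm{\Hsf}_{\rm c}^\herm\bm{\Hsf}_{\rm c}$-type terms (with $T\ge N_{\rm c}+M$ ensuring these are finite).
\end{itemize}
Collecting terms gives $M(T-M)\log\SNR_{\rm c}$ plus the constant; dividing by $T$ and converting to $\log_2$ yields $c$. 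The $o(1)$ comes from showing that the perturbation of the subspace by noise vanishes as $\sigma_{\rm c}\to0$. For this I would use continuity of differential entropy under vanishing additive perturbations, together with dominated convergence for the $\log\det$ terms.

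\emph{Upper bound.} For a general $\bm{\Asf}$, I would bound $h(\bm{\Ysf}_{\rm c})$ by the same coordinate decomposition. The maximum-entropy step gives the Grassmann component at most $\log|G(T,M)|$ (uniform distribution). The coefficient part is bounded by a Gaussian entropy under the power constraint, and the noise part is exact. A concavity/Jensen argument on $\log\det$ then shows that unequal $\bm{\Asf}$ (or $\bm{\Asf}$ of deficient rank) loses either in the prelog or in the constant. Rank deficiency should be handled by first showing that inputs using fewer than $M$ antennas with nonvanishing probability give a strictly smaller prelog $M'(1-M'/T)$; this requires $M\le\min(N_{\rm c},T/2)$, which is implied by $T\ge N_{\rm c}+M$.

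\emph{Main obstacle.} The hardest step is the rigorous upper bound on $h(\bm{\Ysf}_{\rm c})$ for arbitrary $\bm{\Asf}$, including inputs whose singular values scale with $\SNR_{\rm c}$ or vanish. I would first try the Zheng--Tse approach: partition on the event that the smallest singular value of $\bm{\Asf}$ exceeds a threshold $\sigma_{\rm c}^{1-\epsilon}$, and bound the complementary event by a lower-rank prelog.
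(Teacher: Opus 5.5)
The paper does not prove this lemma; it quotes it from Zheng--Tse. Your outline follows their geometric route, but as written your lower-bound computation of $h(\bm{\Ysf}_{\rm c})$ cannot produce the prelog. You assign $N_{\rm c}(T-M)\log(\pi e\sigma_{\rm c}^2)$ to the orthogonal-complement noise, $h(\sqrt{T}\bm{\Hsf}_{\rm c})=N_{\rm c}M\log(\pi e T)$ to the in-subspace part, and a constant (Grassmann volume plus Jacobian) to the subspace. Now subtract your (correct) $h(\bm{\Ysf}_{\rm c}\mid\bm{\Xsf})$. The $N_{\rm c}(T-M)\log(\pi e\sigma_{\rm c}^2)$ terms cancel exactly, and $N_{\rm c}M\log(\pi e T)$ cancels $N_{\rm c}M\log(\pi e(T+\sigma_{\rm c}^2))$ up to $o(1)$, so no $\log\SNR_{\rm c}$ term survives. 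The cause is an over-count: your coordinates use $N_{\rm c}M+N_{\rm c}(T-M)+M(T-M)$ complex dimensions for an $N_{\rm c}T$-dimensional matrix. Parametrize $\bm{\Ysf}_{\rm c}$ intrinsically instead, by its dominant $M$-dimensional right singular subspace $\hat{\Omega}\in G(T,M)$ (its full row space is $N_{\rm c}$-dimensional), the coefficient $\hat{\bm C}=\bm{\Ysf}_{\rm c}\hat{\Omega}^{\herm}$, and the residual. The residual is then orthogonal to the column space of $\hat{\bm C}$ and carries only $(N_{\rm c}-M)(T-M)$ noise dimensions. The remaining $M(T-M)$ dimensions are the tangent directions of the rank-$M$ manifold, i.e., the Grassmann coordinate. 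This gives $h(\bm{\Ysf}_{\rm c})=\log|G(T,M)|+h(\hat{\bm C})+(T-M)\E\log\det(\hat{\bm C}^{\herm}\hat{\bm C})+(N_{\rm c}-M)(T-M)\log(\pi e\sigma_{\rm c}^2)+o(1)$. The surviving coefficient of $\log\sigma_{\rm c}^2$ is $(N_{\rm c}-M)(T-M)-N_{\rm c}(T-M)=-M(T-M)$, and that is the prelog. Equivalently, if you decompose relative to the true subspace, the prelog lives in $I(\bm{\Ysf}_{\rm c};\Omega_{\bm{X}})$, the resolution of the subspace at noise level $\sigma_{\rm c}$. That term is not a constant volume-plus-Jacobian.

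Two further points need fixing. First, the Jacobian is $\det(\hat{\bm C}^{\herm}\hat{\bm C})^{T-M}$. With $\hat{\bm C}\to\sqrt{T}\bm{\Hsf}_{\rm c}\bm{\Psi}$ it yields $(T-M)\bigl[M\log T+\E\log\det\bm{\Hsf}_{\rm c}^{\herm}\bm{\Hsf}_{\rm c}\bigr]$, so the exponent is $T-M$, not $T-N_{\rm c}$. The $M(T-M)\log T$ piece, combined with $-M(T-M)\log(\pi e\sigma_{\rm c}^2)$ and $\sigma_{\rm c}^2=M/\SNR_{\rm c}$, produces the $\log\frac{T}{M\pi e}$ term of $c$. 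Second, finiteness of $\E\log\det\bm{\Hsf}_{\rm c}^{\herm}\bm{\Hsf}_{\rm c}$ needs only $N_{\rm c}\ge M$; the hypothesis $T\ge N_{\rm c}+M$ is what drives the converse. For $\bm{\Asf}$ with entries bounded away from zero, the Jacobian contributes $+(T-M)\E\log\det\bm{\Asf}^2$ and $h(\bm{\Ysf}_{\rm c}\mid\bm{\Xsf})$ contributes $-N_{\rm c}\E\log\det\bm{\Asf}^2$. Hence $I(\bm{\Xsf};\bm{\Ysf}_{\rm c})=h(\bm{\Hsf}_{\rm c}\bm{\Asf}\bm{\Psi})+(T-M-N_{\rm c})\E\log\det\bm{\Asf}^2+{\rm const}+o(1)$, which is exactly the expansion quoted in step (a) of Proposition~\ref{prop:lower_bound}. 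The Gaussian bound $h(\bm{\Hsf}_{\rm c}\bm{\Asf}\bm{\Psi})\le N_{\rm c}M\log(\pi e T)$ holds because each row has covariance $T\bm{I}_M$. This bound and AM--GM/Jensen on $\E\log\det\bm{\Asf}^2$ are attained together at $\bm{\Asf}=\sqrt{T}\bm{I}_M$ only because $T-M-N_{\rm c}\ge 0$. For $T<N_{\rm c}+M$ the two terms pull in opposite directions and the argument no longer singles out equal power. Your converse sketch must state this sign condition explicitly. Your threshold argument for inputs with vanishing singular values is fine as a plan.
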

\section{Sensing-Optimal Design and Effective Communication Input}
\label{sec:sopt}
In this section, we establish the fundamental signaling structure for the noncoherent ISAC system. We first derive the sensing-optimal transmit strategy and then connect it to the noncoherent communication framework. Exploiting the unitary invariance of the isotropic communication channel, we show that the spatial rotation required for sensing is transparent to communication, so that the joint design reduces to a tractable optimization over spatial power allocation.

\subsection{Sensing-Optimal Design}
\begin{prop} \label{sensing_optimal}
To minimize the $J_{\EMMSE}$ under the power constraint \eqref{power_constraint}, the optimal transmit signal $\bm{\Xsf}^{\star}$ must satisfy:

    1) The sample covariance matrix ${\bm{\Rsf}}_{\bm{\Xsf}}$ is deterministic.
    
    2) The optimal sample covariance follows the structure
      \begin{equation}
    \bm{\Rsf}_{\bm{\Xsf}}^{\star} = \bm{U}_{\rm s} (\bm{D}_{\rm s}^{\star})^{2} \bm{U}_{\rm s}^{\herm},
       \end{equation}
    where $\bm{U}_{\rm s}$ is the \RED{unitary} eigenvector matrix of $\bm{R}_{\bm{\Hsf}_{\rm s}} = \bm{U}_{\rm s} \bm{\varLambda}_{\rm s} \bm{U}_{\rm s}^\herm$.
    
    3) \textit{Water-filling Power Allocation:} The $i$-th diagonal element of $(\bm{D}_{\rm s}^{\star})^{2}$, denoted by $(d_{i}^{\star})^{2}$, $d_{i}\geq0$, is given by
    \begin{equation} \label{eq:wf_solution}
    (d_{i}^{\star})^{2} = \frac{1}{\kappa T} \left({{\nu}} - \frac{1}{\lambda_{i}} \right)^+, \, \forall i \in \{1, \dots, M\},
    \end{equation}
    where $\lambda_{i}$ is the $i$-th eigenvalue of $\bm{R}_{\bm{\Hsf}_{\rm s}}$, and $\nu \geq 0$ is the Lagrange multiplier ensuring $\sum_{i=1}^M (d_{i}^{\star})^{2} = M$.
    
    4) If $\bm{\Hsf}_{\rm s}$ is \ac{id}, then $\bm{R}_{\bm{\Hsf}_{\rm s}} \propto \bm{I}_M$ and $d_{i}^{\star} = 1, \forall i$.
\end{prop}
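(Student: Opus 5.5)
The plan is to view $J_{\EMMSE}$ as the expectation of a convex matrix function of the sample covariance and reduce the problem to a deterministic, classical convex program. Define $f(\bm{R}) \triangleq \tr\bigl((\bm{R}_{\bm{\Hsf}_{\rm s}}^{-1}+\kappa T\bm{R})^{-1}\bigr)$ on the cone of positive semidefinite matrices. The map $\bm{R}\mapsto(\bm{A}+\bm{R})^{-1}$ is operator convex for $\bm{A}\succ 0$, and the trace is linear, so $f$ is convex. It is in fact strictly convex along any direction $\bm{\Delta}\neq \bm{0}$: its second derivative is $2\kappa^2T^2\,\tr\bigl(\bm{B}^{-1}\bm{\Delta}\bm{B}^{-1}\bm{\Delta}\bm{B}^{-1}\bigr)>0$ with $\bm{B}=\bm{R}_{\bm{\Hsf}_{\rm s}}^{-1}+\kappa T\bm{R}\succ0$.

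By Jensen's inequality, $\E[f(\bm{\Rsf}_{\bm{\Xsf}})]\ge f(\E[\bm{\Rsf}_{\bm{\Xsf}}])$. The mean $\bar{\bm{R}}=\E[\bm{\Rsf}_{\bm{\Xsf}}]$ satisfies the same constraint $\tr(\bar{\bm{R}})=M$ because of \eqref{power_constraint}. This bound is attained by a deterministic $\bm{X}$ with $T^{-1}\bm{X}\bm{X}^\herm=\bar{\bm{R}}$, which exists since $T\ge M$. Strict convexity implies that equality in Jensen forces $\bm{\Rsf}_{\bm{\Xsf}}=\bar{\bm{R}}$ almost surely, which proves item 1). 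Some care is needed to argue that strict convexity along segments transfers to the equality case of Jensen for matrix-valued random variables. I would handle this via the strictly positive second-order remainder, or simply state it as "a deterministic choice is optimal" if strictness is delicate.

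Next, solve $\min_{\bm{R}\succeq0,\ \tr\bm{R}=M} f(\bm{R})$. Write $\bm{R}_{\bm{\Hsf}_{\rm s}}^{-1}=\bm{U}_{\rm s}\bm{\varLambda}_{\rm s}^{-1}\bm{U}_{\rm s}^\herm$ and $\tilde{\bm{R}}=\bm{U}_{\rm s}^\herm\bm{R}\bm{U}_{\rm s}$, so that $f=\tr\bigl((\bm{\varLambda}_{\rm s}^{-1}+\kappa T\tilde{\bm{R}})^{-1}\bigr)$ and the trace constraint is unchanged. For any PD matrix $\bm{B}$, $\tr(\bm{B}^{-1})\ge\sum_i 1/[\bm{B}]_{ii}$ (by Cauchy--Schwarz, or convexity of $x\mapsto 1/x$ applied through the spectral decomposition, since $[\bm{B}^{-1}]_{ii}\ge 1/[\bm{B}]_{ii}$), with equality iff $\bm{B}$ is diagonal. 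Hence replacing $\tilde{\bm{R}}$ by its diagonal part weakly decreases $f$ while preserving feasibility. This forces $\tilde{\bm{R}}=(\bm{D}_{\rm s}^\star)^2$ diagonal, i.e. $\bm{R}^\star=\bm{U}_{\rm s}(\bm{D}_{\rm s}^\star)^2\bm{U}_{\rm s}^\herm$, which is item 2).

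The remaining scalar problem is to minimize $\sum_i (1/\lambda_i+\kappa T d_i^2)^{-1}$ subject to $\sum_i d_i^2=M$ and $d_i^2\ge0$. This is convex, so the KKT conditions are necessary and sufficient. Stationarity gives $\kappa T(1/\lambda_i+\kappa T d_i^2)^{-2}=\mu$ on the active set, i.e. $1/\lambda_i+\kappa T d_i^2=\nu$ with $\nu=\sqrt{\kappa T/\mu}$. Complementary slackness covers the inactive set, which yields \eqref{eq:wf_solution}. The level $\nu$ is fixed uniquely because the total power $\sum_i(\nu-1/\lambda_i)^+$ is continuous and strictly increasing once positive. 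This is item 3).

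For item 4), if $\bm{\Hsf}_{\rm s}$ is i.d. in the row sense, then $\bm{R}_{\bm{\Hsf}_{\rm s}}=\bm{V}^\herm\bm{R}_{\bm{\Hsf}_{\rm s}}\bm{V}$ for every unitary $\bm{V}$, which is why Lemma~\ref{id}-type invariance forces $\bm{R}_{\bm{\Hsf}_{\rm s}}=\lambda\bm{I}_M$. All $\lambda_i$ are then equal, the water-filling solution is uniform, and the power constraint gives $d_i^\star=1$.

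The main obstacle is the rigorous "must satisfy" wording in item 1). Here I would rely on the strict convexity computation above, which also yields uniqueness of the minimizer $\bm{R}^\star$ and hence makes items 1)–3) necessary rather than merely sufficient.
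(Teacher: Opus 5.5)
Your proposal is correct and follows the same route as the paper's appendix proof: convexity of $f$ plus Jensen for item 1), alignment with the eigenbasis of $\bm{R}_{\bm{\Hsf}_{\rm s}}$ for item 2), KKT water-filling for item 3), and isotropy/symmetry for item 4). Two of your additions go beyond the paper. Strict convexity of $f$ gives uniqueness of $\bm{R}^\star$ and hence the ``must satisfy'' necessity, which the paper only establishes as sufficiency (``with equality if deterministic''). The explicit inequality $[\bm{B}^{-1}]_{ii}\ge 1/[\bm{B}]_{ii}$ replaces the paper's unspecified ``standard trace/eigenvalue alignment argument''.
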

\begin{proof}
(sketch) Convexity of $J_{\EMMSE}$ with respect to ${\bm{\Rsf}}_{\bm{\Xsf}}$ and Jensen's inequality imply that the optimal $\bm{\Rsf}_{\bm{\Xsf}}$ is deterministic. Eigenvector alignment and the water-filling structure follow from standard trace inequalities and KKT conditions.
For a detailed proof, please refer to Appendix~\ref{app:sensing_optimal}.
\end{proof}

\subsection{Effective Communication Input}\label{subsec:eff_comm_input}
\RED{Note that Proposition~\ref{sensing_optimal} only imposes a constraint on the sample covariance matrix, implying that the sensing-optimal transmit signal is not unique. To simultaneously fulfill this sensing requirement and the noncoherent capacity-achieving structure in Lemma~\ref{AQ}, a structurally compatible transmit signal is constructed as} $\bm{\Xsf}^{\star} = \sqrt{T}\,\bm{U}_{\rm s} \bm{D}_{\rm s}^{\star} \bm{\Qsf},$
where $\bm{D}_{\rm s}^{\star}$ is defined in Proposition~\ref{sensing_optimal} and $\bm{\Qsf}$ is Haar-distributed semi-unitary.
In isotropic scattering environments, the communication channel $\bm{\Hsf}_{\rm c}$ is \ac{id}. By Lemma~\ref{id}, we have
$$
\bm{\Hsf}_{\rm c} \bm{\Xsf}^{\star}
= \sqrt{T} (\bm{\Hsf}_{\rm c} \bm{U}_{\rm s}) \bm{D}_{\rm s}^{\star} \bm{\Qsf}
\overset{{\rm d}}{=} \sqrt{T} \bm{\Hsf}_{\rm c} \bm{D}_{\rm s}^{\star} \bm{\Qsf},
$$
where $\overset{\rm d}{=}$ denotes equality in distribution.
This equivalence indicates that the unitary precoder $\bm{U}_{\rm s}$ designed for sensing eigenspace alignment is completely absorbed into the channel distribution and \RED{thus does not affect the ergodic communication rate.}
Consequently, from the communication perspective, the effective input under the sensing-optimal design is given by
\begin{equation}
\label{xeff}
\bm{\Xsf}_{\text{eff}}^{\star} = \sqrt{T}\,\bm{D}_{\rm s}^{\star} \bm{\Qsf}.
\end{equation}
Comparing this with the general representation $\bm{\Xsf} = \bm{\Asf}\bm{\Qsf}$ in Lemma~\ref{AQ}, we see that a sensing-optimal choice within this class is to take a deterministic diagonal matrix
\begin{equation}
\bm{A}_{\rm s}^{\star} \triangleq \sqrt{T}\,\bm{D}_{\rm s}^{\star}.
\end{equation}
In the subsequent joint design, we restrict attention to transmit signals of the form
$\bm{\Xsf} = \bm{\Asf}\bm{\Qsf}$ in Lemma~\ref{AQ}, and model the diagonal  
matrix as a \emph{deterministic} design variable within each coherence block. That is, 
we set $\bm{\Asf} = \bm{A}$ a.s. for a generic diagonal matrix 
$\bm{A} \succeq \bm{0}$ (possibly different from $\bm{A}_{\rm s}^{\star}$), while the 
randomness of the input is entirely carried by $\bm{\Qsf}$. 
Furthermore, based on the unitary invariance of both the sensing metric and the 
\ac{id} communication channel, it is without loss of generality to 
work in the \textit{eigen-basis} of $\bm{R}_{\bm{\Hsf}_{\rm s}}$. That is, we can rotate the 
coordinate system by $\bm{U}_{\rm s}$ so that $\bm{R}_{\bm{\Hsf}_{\rm s}}$ becomes 
diagonal, and the design reduces to optimizing the diagonal entries of $\bm{A}$ over $M$ parallel spatial streams. 
\section{Optimal Design and Performance Analysis in High and Low SNR Regimes}
\label{sec:zheng_tse}
We now investigate the optimal signaling strategy for the considered noncoherent ISAC system. Our goal is to characterize the structure of the transmit signal $\bm{\Xsf}$ that balances the conflicting requirements of sensing  and communication.

\subsection{Asymptotic Analysis in the High-SNR Regime}
In this subsection, we consider the high-SNR regime. 
\begin{prop} \label{prop:lower_bound}
In the high-SNR regime, the noncoherent mutual information $I(\bm{\Xsf};\bm{\Ysf}_{\rm c})$ is lower-bounded as
\begin{equation} \label{eq:stat_bound}
\begin{aligned}
I(\bm{\Xsf};\bm{\Ysf}_{\rm c}) &\ge (T - M) \E \left[ \log\operatorname{det}(\bm{\Asf}^2) \right] + \hat{c} + o(1).
\end{aligned}
\end{equation}
\end{prop}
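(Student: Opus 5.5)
The plan is to follow the Zheng--Tse high-SNR argument, working with the structure $\bm{\Xsf}=\bm{\Asf}\bm{\Qsf}$ of Lemma~\ref{AQ} with $\bm{\Qsf}$ \ac{id} semi-unitary and independent of $\bm{\Asf}$. The starting point is the chain-rule split
\begin{equation*}
I(\bm{\Xsf};\bm{\Ysf}_{\rm c}) = I(\bm{\Asf},\bm{\Qsf};\bm{\Ysf}_{\rm c}) \ge I(\bm{\Qsf};\bm{\Ysf}_{\rm c}\mid \bm{\Asf}).
\end{equation*}
Dropping $I(\bm{\Asf};\bm{\Ysf}_{\rm c})\ge 0$ is what makes the result only a lower bound. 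In the deterministic-$\bm{A}$ setting of Section~\ref{subsec:eff_comm_input} this step is actually tight. It then suffices to lower-bound $h(\bm{\Ysf}_{\rm c}\mid\bm{A}) - h(\bm{\Ysf}_{\rm c}\mid \bm{A},\bm{Q})$ for a fixed diagonal $\bm{A}\succ\bm{0}$ and average over $\bm{\Asf}$ at the end.

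The conditional entropy term is explicit. Given $\bm{X}=\bm{A}\bm{Q}$, the rows of $\bm{\Ysf}_{\rm c}$ are i.i.d. $\mathcal{CN}(\bm{0},\sigma_{\rm c}^2\bm{I}_T+\bm{Q}^\herm\bm{A}^2\bm{Q})$, so
\begin{equation*}
h(\bm{\Ysf}_{\rm c}\mid\bm{A},\bm{Q}) = N_{\rm c}\log\det(\pi e\,\bm{I}_M\sigma_{\rm c}^2+\bm{A}^2)+N_{\rm c}(T-M)\log(\pi e\sigma_{\rm c}^2).
\end{equation*}
Note that $\bm{Q}$ drops out because $\bm{Q}\bm{Q}^\herm=\bm{I}_M$.

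For $h(\bm{\Ysf}_{\rm c}\mid\bm{A})$, I would use the high-SNR geometric decomposition. Write $\bm{\Ysf}_{\rm c}\approx\bm{\Hsf}_{\rm c}\bm{A}\bm{Q}$; its row space is the $M$-dimensional subspace spanned by $\bm{Q}$, which is uniform on $G(T,M)$. The change of variables $\bm{Y}\mapsto(\bm{C},\Omega)$ with $\bm{Y}=\bm{C}\bm{\Omega}$, $\bm{C}\in\mathbb{C}^{N_{\rm c}\times M}$ and $\Omega\in G(T,M)$, has Jacobian $\det(\bm{C}^\herm\bm{C})^{T-M}$. This gives
\begin{equation*}
h(\bm{\Ysf}_{\rm c}\mid\bm{A}) \ge h(\bm{C}) + \log|G(T,M)| + (T-M)\E\log\det(\bm{A}\bm{\Hsf}_{\rm c}^\herm\bm{\Hsf}_{\rm c}\bm{A}) + o(1),
\end{equation*}
where the noise perturbation is handled by a data-processing/continuity argument as in \cite{lizhong_zheng_communication_2002}. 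The term $h(\bm{C})=h(\bm{\Hsf}_{\rm c}\bm{A}\bm{U})$ with an \ac{id} rotation $\bm{U}$ contributes $N_{\rm c}\log\det(\bm{A}^2)$ plus constants.

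Subtracting, the $N_{\rm c}\log\det\bm{A}^2$ terms cancel against the coherent part of the conditional entropy up to $o(1)$, because $\log\det(\sigma_{\rm c}^2\bm{I}+\bm{A}^2)=\log\det\bm{A}^2+o(1)$ when $\bm{A}^2\gg\sigma_{\rm c}^2$. What survives is $(T-M)\log\det\bm{A}^2$. The remaining $\sigma_{\rm c}^2$-dependence, the Grassmann volume, and $(T-M)\E\log\det\bm{\Hsf}_{\rm c}^\herm\bm{\Hsf}_{\rm c}$ are collected into $\hat c$, which I take to carry the $\SNR_{\rm c}$ scaling as in Lemma~\ref{cpure}. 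Averaging over $\bm{\Asf}$ then gives \eqref{eq:stat_bound}.

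The main obstacle is making the high-SNR change of variables rigorous for unequal $\bm{A}$, including possibly small diagonal entries. The $o(1)$ term requires every $a_i^2/\sigma_{\rm c}^2\to\infty$, so I would assume $\bm{A}\succ\bm{0}$ with fixed ratios as $\SNR_{\rm c}\to\infty$. The noise-induced error I would bound by lower-bounding $h(\bm{\Ysf}_{\rm c}\mid\bm{A})$ via $h(\bm{\Ysf}_{\rm c}\mid\bm{A})\ge h(\bm{\Hsf}_{\rm c}\bm{A}\bm{Q})$, which holds because adding independent noise increases entropy, so the bound needs no approximation at all. This is probably the cleanest route and is the one I would try first.
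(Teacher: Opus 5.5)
Your skeleton follows the paper's proof:
\begin{itemize}
\item you discard a nonnegative mutual-information term (you drop $I(\bm{\Asf};\bm{\Ysf}_{\rm c})$, the paper drops $I(\bm{\Hsf}_{\rm c}\bm{\Asf}\bm{\Psi};\bm{\Asf})$);
\item you use the Zheng--Tse coordinates with $\bm{C}=\bm{\Hsf}_{\rm c}\bm{A}\bm{\Usf}$;
\item you bound $h(\bm{C})\ge h(\bm{C}\mid\bm{\Usf})=N_{\rm c}\log\det(\pi e\bm{A}^2)$.
\end{itemize}
That last step is an inequality, not ``$N_{\rm c}\log\det\bm{A}^2$ plus constants''. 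The discarded $I(\bm{\Hsf}_{\rm c}\bm{A}\bm{\Usf};\bm{\Usf})$ depends on $\bm{A}$ and vanishes only for $\bm{A}\propto\bm{I}_M$; dropping it is the paper's step (e).

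\textbf{The gap.} Your bound on $h(\bm{\Ysf}_{\rm c}\mid\bm{A})$ fails when $N_{\rm c}>M$, which Lemma~\ref{cpure} allows. The noiseless signal $\bm{\Hsf}_{\rm c}\bm{A}\bm{\Qsf}$ has rank $M$. Rank-$M$ matrices in $\mathbb{C}^{N_{\rm c}\times T}$ form a manifold of complex dimension $M(N_{\rm c}+T-M)=N_{\rm c}T-(N_{\rm c}-M)(T-M)$. This has three consequences:
\begin{itemize}
\item $h(\bm{\Hsf}_{\rm c}\bm{A}\bm{\Qsf})=-\infty$, so your ``cleanest route'' $h(\bm{\Ysf}_{\rm c}\mid\bm{A})\ge h(\bm{\Hsf}_{\rm c}\bm{A}\bm{\Qsf})$ is vacuous.
\item $\bm{Y}\mapsto(\bm{C},\Omega)$ is not a change of variables on $\mathbb{C}^{N_{\rm c}\times T}$.
\item Your displayed bound $h(\bm{\Ysf}_{\rm c}\mid\bm{A})\ge h(\bm{C})+\log|G(T,M)|+(T-M)\E\log\det(\bm{A}\bm{\Hsf}_{\rm c}^\herm\bm{\Hsf}_{\rm c}\bm{A})+o(1)$ is false. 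The $(N_{\rm c}-M)(T-M)$ noise-only dimensions contribute $(N_{\rm c}-M)(T-M)\log(\pi e\sigma_{\rm c}^2)\to-\infty$, while your right-hand side stays bounded as $\sigma_{\rm c}^2\to 0$.
\end{itemize}

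Your own bookkeeping shows the problem. After subtracting $h(\bm{\Ysf}_{\rm c}\mid\bm{A},\bm{Q})$, the leftover noise term is $-N_{\rm c}(T-M)\log(\pi e\sigma_{\rm c}^2)$. That is a pre-log of $N_{\rm c}(T-M)$, exceeding the $M(T-M)$ degrees of freedom of Lemma~\ref{cpure}. You cannot fix this by declaring that $\hat c$ ``carries the $\SNR_{\rm c}$ scaling as in Lemma~\ref{cpure}''. The constant $\hat c$ must be the specific $\bm{A}$-independent constant that makes the bound tight at $\bm{A}=\sqrt{T}\bm{I}_M$, and the definition of $\Delta(\bm{A})$ relies on exactly that.

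\textbf{The fix.} Use the full Zheng--Tse expansion behind the paper's step (a), whose noise term the paper absorbs into $c_0$:
\begin{align*}
h(\bm{\Ysf}_{\rm c}\mid\bm{A}) &= h(\bm{C})+\log|G(T,M)|+(T-M)\E\log\det(\bm{C}^\herm\bm{C})\\
&\quad+(N_{\rm c}-M)(T-M)\log(\pi e\sigma_{\rm c}^2)+o(1).
\end{align*}
The net noise term then becomes $-M(T-M)\log(\pi e\sigma_{\rm c}^2)$. The rest of your argument then goes through and recovers exactly $C_{\rm pure}$ at $\bm{A}=\sqrt{T}\bm{I}_M$.

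As written, your proof is correct only for $N_{\rm c}=M$, which is the simulated case. There rank-$M$ matrices are full-dimensional. Your step $h(\bm{\Ysf}_{\rm c}\mid\bm{A})\ge h(\bm{\Hsf}_{\rm c}\bm{A}\bm{\Qsf})$ is then a rigorous, approximation-free alternative to invoking the asymptotic expansion.

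A minor correction: the conditional entropy should read $N_{\rm c}\log\det(\pi e(\sigma_{\rm c}^2\bm{I}_M+\bm{A}^2))$.
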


\begin{proof} 
\begin{align}
&I(\bm{\Xsf};\bm{\Ysf}_{\rm c})=h(\bm{\Ysf}_{\rm c})-h(\bm{\Ysf}_{\rm c} \mid \bm{\Xsf})\nonumber\\
 \stackrel{(a)}{=} &h(\bm{\Hsf}_{\rm c}\bm{\Asf}\bm{\Psi}) + (T - M-N_{\rm c}) \mathbb{E} \left[ \log\det(\bm{\Asf}^2) \right] + c_0 + o(1)  \nonumber\\
 \stackrel{(b)}{=} &h(\bm{\Hsf}_{\rm c}\bm{\Asf}\bm{\Psi} \mid \bm{\Asf}, \bm{\Psi})  + I(\bm{\Hsf}_{\rm c}\bm{\Asf}\bm{\Psi} ; \bm{\Asf}, \bm{\Psi}) \nonumber \\
& \quad + (T - M-N_{\rm c}) \mathbb{E} \left[ \log\det(\bm{\Asf}^2) \right] + c_0 + o(1)  \nonumber\\
 \stackrel{(c)}{=}& \mathbb{E} \left[ N_{\rm c} \log\det(\pi e \bm{\Asf}^2) \right] + I(\bm{\Hsf}_{\rm c}\bm{\Asf}\bm{\Psi} ; \bm{\Asf}) + I(\bm{\Hsf}_{\rm c}\bm{\Asf}\bm{\Psi} ; \bm{\Psi} \mid \bm{\Asf})  \nonumber\\
& \quad + (T - M-N_{\rm c}) \mathbb{E} \left[ \log\det(\bm{\Asf}^2) \right] + c_0 + o(1) \nonumber\\
 \stackrel{(d)}{=} &(T - M) \mathbb{E} \left[ \log\det({\bm{\Asf}}^2) \right] + I(\bm{\Hsf}_{\rm c}\bm{\Asf}\bm{\Psi} ; \bm{\Asf})  \nonumber\\
&\quad + I(\bm{\Hsf}_{\rm c}\bm{\Asf}\bm{\Psi} ; \bm{\Psi} \mid \bm{\Asf}) + \hat{c} + o(1)  \nonumber\\
 \stackrel{(e)}{\ge} &(T - M) \mathbb{E} \left[ \log\det(\bm{\Asf}^2) \right] + \hat{c} + o(1)\triangleq I_{\mathrm{lb}}(\bm{\Asf}),\label{eq:derivation}
\end{align}
\RED{where (a) follows from the high-SNR asymptotic expansion in \cite[Eq.~(25) and subsequent equations]{lizhong_zheng_communication_2002} with $c_0$ denoting all terms independent of $\bm{\Asf}$, and $\bm{\Psi} \in \mathbb{C}^{M \times M}$ is an auxiliary \ac{id} unitary matrix independent of $\bm{\Hsf}_{\rm c}$ and $\bm{\Asf}$;}
(b) applies the entropy decomposition;
(c) conditions on $(\bm{\Asf},\bm{\Psi})$ and evaluates the resulting Gaussian entropy, yielding
$h(\bm{\Hsf}_{\rm c}\bm{\Asf}\bm{\Psi} \mid \bm{\Asf},\bm{\Psi}) = \mathbb{E}[N_{\rm c}\log\det(\pi e \bm{\Asf}^2)]$;
(d) collects all $\bm{\Asf}$-dependent terms and absorbs constants into $\hat{c}$;
and (e) follows from the non-negativity of mutual information.
\end{proof}

In particular, if $\bm{\Asf}$ is deterministic, then
$I(\bm{\Hsf}_{\rm c}\bm{\Asf}\bm{\Psi}; \bm{\Asf}) = 0$.
Moreover, in the communication-optimal case where
$\bm{\Asf} = \sqrt{T}\bm{I}_M$ a.s., we have
$I(\bm{\Hsf}_{\rm c}\bm{\Asf}\bm{\Psi}; \bm{\Psi} \mid \bm{\Asf}) = 0$,
since $\sqrt{T}\bm{\Hsf}_{\rm c}\bm{\Psi}$ is independent of $\bm{\Psi}$
under \ac{id} $\bm{\Hsf}_{\rm c}$ (cf.\ Lemma~\ref{id}).
Hence, the lower bound in \eqref{eq:stat_bound} is tight in this case.

To characterize the impact of sensing-driven deviations from uniform power allocation in noncoherent \ac{isac} systems, we transition from the statistical bound in Proposition~\ref{prop:lower_bound} to a structure-aware metric.
\begin{defn} 
For any given  $\bm{A} \succeq \bm{0}$ satisfying $\mathrm{tr}(\bm{A}^2) = MT$, the sensing-induced rate loss is 
\begin{equation}
\begin{aligned}
    \Delta(\bm{A})
    &\triangleq C_{\rm pure} -T^{-1} I_{\mathrm{lb}}(\bm{A}) \\
    &=(1-\frac{M}{T})(M\log T - \log\operatorname{det}(\bm{A}^2)), 
\end{aligned}
\end{equation}
where $C_{\mathrm {pure}}$ denotes the achievable high-SNR rate attained by $\bm{A}^{\star} = \sqrt{T}\bm{I}_M$ (cf.\ Lemma~\ref{cpure}), and $I_{\mathrm{lb}}(\bm{A})$ is the  lower bound obtained by evaluating \eqref{eq:derivation} at the realization $\bm{\Asf} = \bm{A}$.
\end{defn}

\begin{thm}
 The sensing-induced rate loss $\Delta(\bm{A}) \ge 0$ for all feasible $\bm{A}$, with equality if and only if $\bm{A}^2 = T\bm{I}_M$. 
\end{thm}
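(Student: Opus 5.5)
The plan is to reduce the claim to the AM--GM inequality on the eigenvalues of $\bm{A}^2$. Since $\bm{A} \succeq \bm{0}$ is diagonal (the paper restricts to this class in Section~\ref{subsec:eff_comm_input}), write $\bm{A}^2 = \mathrm{diag}(a_1^2,\dots,a_M^2)$ with $a_i \ge 0$. The feasibility constraint then reads $\sum_{i=1}^M a_i^2 = MT$.

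\textbf{Step 1: isolate the relevant factor.} The prefactor in the definition of $\Delta(\bm{A})$ satisfies $1 - M/T > 0$, because Lemma~\ref{cpure} assumes $T \ge N_{\rm c} + M > M$. Hence the sign of $\Delta(\bm{A})$, and whether it vanishes, is determined by
\begin{equation*}
M\log T - \log\det(\bm{A}^2) = M\log T - \sum_{i=1}^M \log a_i^2 .
\end{equation*}

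\textbf{Step 2: the degenerate case.} If some $a_i = 0$, then $\log\det(\bm{A}^2) = -\infty$. In that case $\Delta(\bm{A}) = +\infty > 0$, and $\bm{A}^2 \neq T\bm{I}_M$, so the claim holds trivially. This corresponds to rank-deficient allocations, which the high-SNR bound penalizes infinitely.

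\textbf{Step 3: the main case $a_i > 0$ for all $i$.} Apply the AM--GM inequality, or equivalently Jensen's inequality for the strictly concave function $\log$:
\begin{equation*}
\frac{1}{M}\sum_{i=1}^M \log a_i^2 \le \log\Bigl(\frac{1}{M}\sum_{i=1}^M a_i^2\Bigr) = \log T .
\end{equation*}
Multiplying by $M$ gives $\log\det(\bm{A}^2) \le M\log T$, and therefore $\Delta(\bm{A}) \ge 0$.

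\textbf{Step 4: equality.} Since $\log$ is strictly concave, equality in Jensen's inequality holds if and only if all $a_i^2$ are equal. Combined with the trace constraint, this forces $a_i^2 = T$ for every $i$, i.e., $\bm{A}^2 = T\bm{I}_M$. Conversely, substituting $\bm{A}^2 = T\bm{I}_M$ gives $\log\det(\bm{A}^2) = M\log T$, so $\Delta(\bm{A}) = 0$.

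The only genuine subtlety is Step 2: the definition of $\Delta$ must be read as taking values in $[0,\infty]$ so that singular $\bm{A}$ is covered. Everything else is the AM--GM equality condition.
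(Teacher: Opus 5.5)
Your proof is correct and follows essentially the same argument as the paper: both apply AM--GM (Jensen for $\log$) to the diagonal entries of $\bm{A}^2$ under $\mathrm{tr}(\bm{A}^2)=MT$, then read off the equality case. Your extra checks fill in details the paper leaves implicit: that $1-M/T>0$ (without which the ``only if'' direction would fail), and that a singular $\bm{A}$ gives $\Delta=+\infty$.
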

\begin{proof}
By applying the Arithmetic-Geometric Mean inequality, \(\frac{1}{M}\log\det(\bm{A}^2)= \frac{1}{M} \sum_{i=1}^M \log [\bm{A}]_{ii}^2 \le \log(\frac{1}{M}\sum_{i=1}^M [\bm{A}]_{ii}^2) = \log T\). Thus, $\Delta \ge 0$, with equality holding if and only if all $[\bm{A}]_{ii}$ are equal, i.e., $\bm{A}^2 = T\bm{I}_M$.
\end{proof}

\begin{rem}\label{rem1}
Within the considered noncoherent \ac{isac} framework, a favorable structural compatibility emerges:

    $\bullet$ \textit{Structural Compatibility:} 
    When $\bm{A}$ is deterministic, the semi-unitary constraint $\bm{\Qsf}\bm{\Qsf}^\herm = \bm{I}_M$ ensures that the covariance $\bm{R}_{\bm{\Xsf}} = T^{-1}\bm{A}^2$ is independent of $\bm{\Qsf}$. As a result, the sensing performance depends only on the second-order statistics of the transmit signal and is decoupled from the specific realizations of the random communication symbols.
    
    $\bullet$ \textit{Tradeoff Vanishing:} 
    While noncoherent communication favors equal-power signaling ($\bm{\Xsf} = \sqrt{T}\bm{\Qsf}$) in the high-SNR regime, the sensing-optimal design reduces to the same structure when the sensing channel is isotropic ($\bm{R}_{\bm{\Hsf}_{\rm s}} \propto \bm{I}_M$) (cf.\ Proposition~\ref{sensing_optimal}).
Consequently, under isotropic sensing environments and high-SNR asymptotics, the design objectives for \ac{snc}  become structurally compatible, implying that the tradeoff vanishes under the adopted performance metrics.
\end{rem}
\vspace{-0.3em}
Building on the sensing-induced rate loss metric defined above and the effective communication input representation developed in Subsection~\ref{subsec:eff_comm_input}, the joint \ac{snc} design can be formulated as a tradeoff between the rate penalty and the sensing error. Since the problem is fully parameterized by the diagonal power allocation matrix $\hat{\bm{A}}=\bm{A}^2= \mathrm{diag}(\hat{a}_1,\ldots,\hat{a}_M)$, this leads to the following optimization problem:
\begin{subequations}\label{OPTA1}
\begin{align}
\min_{\hat{\bm{A}}} \quad
& F(\hat{\bm{A}})
\triangleq \alpha\,\hat{\Delta}(\hat{\bm{A}})
+ (1-\alpha)\,S(\hat{\bm{A}}) \\
\mathrm{s.t.} \quad
& \mathrm{tr}(\hat{\bm{A}}) = MT, \,
\hat{\bm{A}} \succeq \bm{0},
\end{align}
\end{subequations}
\RED{where $\alpha \in [0,1]$ is the weighting factor that implicitly incorporates the necessary normalization to commensurate the disparate orders of magnitude of the two metrics,}
$\hat{\Delta}(\hat{\bm{A}})$ denotes the loss
$\Delta(\bm{A})$ expressed in terms of
$\hat{\bm{A}} = \bm{A}^2$, and
$S(\hat{\bm{A}})
= \sum_{i=1}^M (\lambda_i^{-1} + \kappa \hat{a}_i)^{-1}$
denotes the sensing metric.

\begin{prop}\label{prop:opta_convex}
Problem \eqref{OPTA1} is a convex optimization problem with respect to $\hat{\bm{A}}$.
\end{prop}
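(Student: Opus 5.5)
We show that the feasible set is convex and that the objective $F$ is convex on it, treating the two terms of $F$ separately. Since $\hat{\bm{A}}$ is diagonal, we identify it with the vector $\hat{\bm{a}}=(\hat{a}_1,\ldots,\hat{a}_M)\in\mathbb{R}^M$. The constraint $\mathrm{tr}(\hat{\bm{A}})=MT$ then becomes the affine equality $\sum_i \hat{a}_i = MT$, and $\hat{\bm{A}}\succeq\bm{0}$ becomes $\hat{a}_i\ge 0$ for all $i$. The feasible set is therefore a scaled probability simplex, which is convex and compact.

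For the rate-loss term, we use $\det(\bm{A}^2)=\prod_i \hat{a}_i$ to write
\[
\hat{\Delta}(\hat{\bm{A}}) = \Bigl(1-\frac{M}{T}\Bigr)\Bigl(M\log T - \sum_{i=1}^M \log \hat{a}_i\Bigr).
\]
Each $-\log \hat{a}_i$ is convex on $\hat{a}_i>0$, and the prefactor $1-M/T$ is nonnegative because $T\ge N_{\rm c}+M > M$ by Lemma~\ref{cpure}. Hence $\hat{\Delta}$ is convex. At the boundary $\hat{a}_i=0$ we adopt the extended-value convention $-\log 0=+\infty$. The function remains convex and lower semicontinuous as an extended-real function on the closed simplex, so for $\alpha>0$ the minimizer automatically lies in the relative interior.

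For the sensing term, $S(\hat{\bm{A}})=\sum_i g_i(\hat{a}_i)$ with $g_i(x)=(\lambda_i^{-1}+\kappa x)^{-1}$. We verify $g_i''(x)=2\kappa^2(\lambda_i^{-1}+\kappa x)^{-3}>0$ for $x\ge 0$, using $\lambda_i>0$ (since $\bm{R}_{\bm{\Hsf}_{\rm s}}$ is full rank) and $\kappa>0$. So $S$ is a separable sum of convex functions and is convex.

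Finally, $F=\alpha\hat{\Delta}+(1-\alpha)S$ is a nonnegative combination of convex functions for $\alpha\in[0,1]$, hence convex. Minimizing it over a convex set is a convex program. The only delicate point is the boundary behaviour of the $\log$ term at $\hat{a}_i=0$, which the extended-value convention settles. It also helps to note that for $\alpha\in(0,1]$ the Hessian is diagonal with strictly positive entries, so $F$ is strictly convex there and the optimizer is unique. This uniqueness is useful for the projected-gradient algorithm used later.
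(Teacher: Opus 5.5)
Your proof is correct and follows essentially the same route as the paper's: convexity of $-\log\det(\hat{\bm{A}})$ (which you write as $-\sum_i \log \hat{a}_i$), the explicit second derivative $2\kappa^2(\lambda_i^{-1}+\kappa \hat{a}_i)^{-3}$ of each separable sensing term, a nonnegative weighted sum of the two, and a feasible set that is the intersection of an affine constraint with the nonnegativity cone. Your extra remarks are valid refinements the paper leaves implicit: the prefactor $1-M/T$ must be positive, the boundary is handled by the extended-value convention, and strict convexity gives a unique optimizer.
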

\begin{proof}
(sketch) The objective is convex since $-\log\det(\hat{\bm{A}})$ is convex on the positive semidefinite cone and
$S(\hat{\bm{A}})$ is a sum of convex functions in $\hat{a}_i$, while the constraint is affine.
For a detailed proof, please refer to Appendix~\ref{app:opta_convex}.
\end{proof}
\vspace{-0.3em}
The optimal power allocation $\hat{\bm{A}}^\star$ is computed via a \ac{pga}. The gradient $\bm{G} = \nabla F(\hat{\bm{A}})$ is a diagonal matrix with entries:
\begin{equation}\label{pga}
    [\bm{G}]_{ii} = -\frac{\alpha (1-\frac{M}{T})}{\hat{a}_i} - \frac{(1-\alpha) \kappa}{(\lambda_{i}^{-1} + \kappa \hat{a}_i)^{2}}, \, \forall i=1,\dots,M.
\end{equation}
In each iteration, the variable is updated via gradient descent followed by a projection $\mathcal{P}_{\mathcal{S}}$ onto the simplex constraint set $\{\hat{\bm{A}} \succeq \bm{0} \mid \mathrm{tr}(\hat{\bm{A}}) = MT\}$. Upon convergence, the optimal solution is recovered as $\bm{A}^{\star} = (\hat{\bm{A}}^{\star})^{1/2}$.

\begin{figure}[t!]
\vspace{0.05in}
  \centering 
  \begin{minipage}{0.48\linewidth}
    \centering 
    \includegraphics[width=\linewidth]{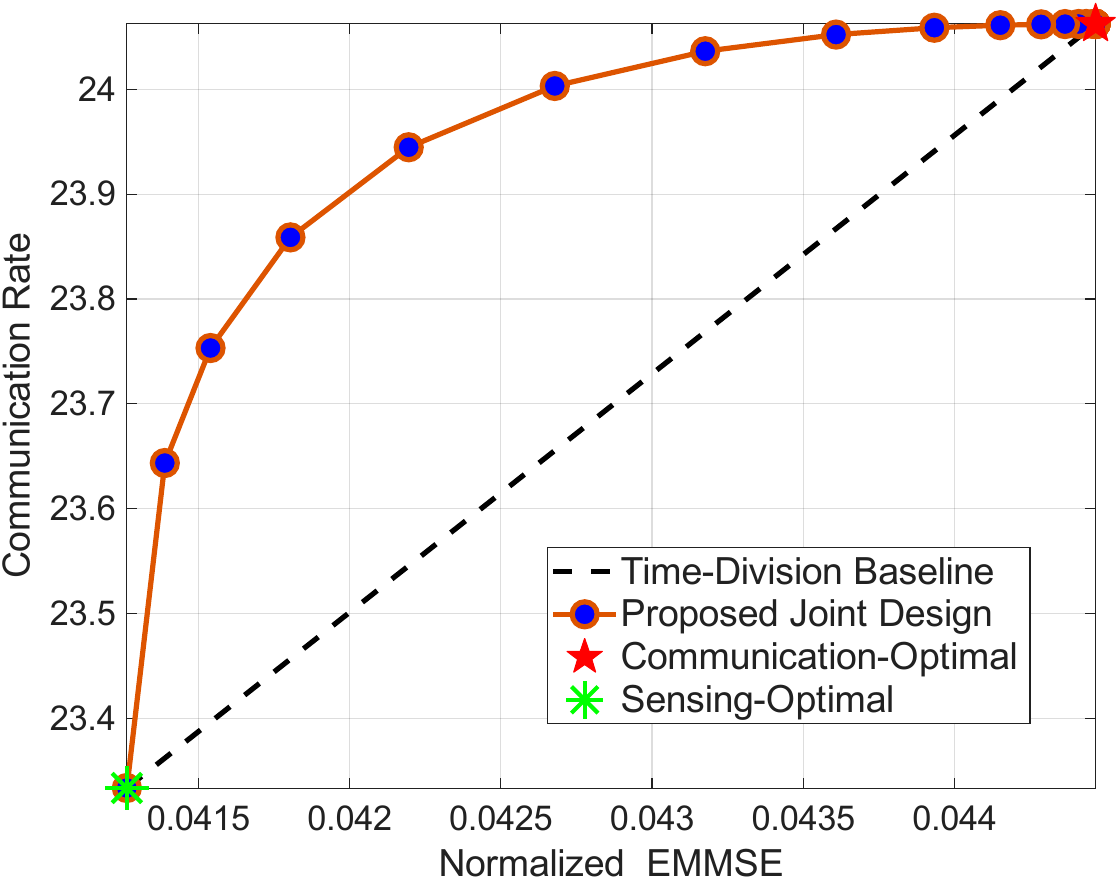}   
    \caption{The tradeoff between communication rate and normalized EMMSE.}
    \label{fig:tradeoff} 
  \end{minipage}
  \hfill
  \begin{minipage}{0.48\linewidth}
    \centering
    \includegraphics[width=\linewidth]{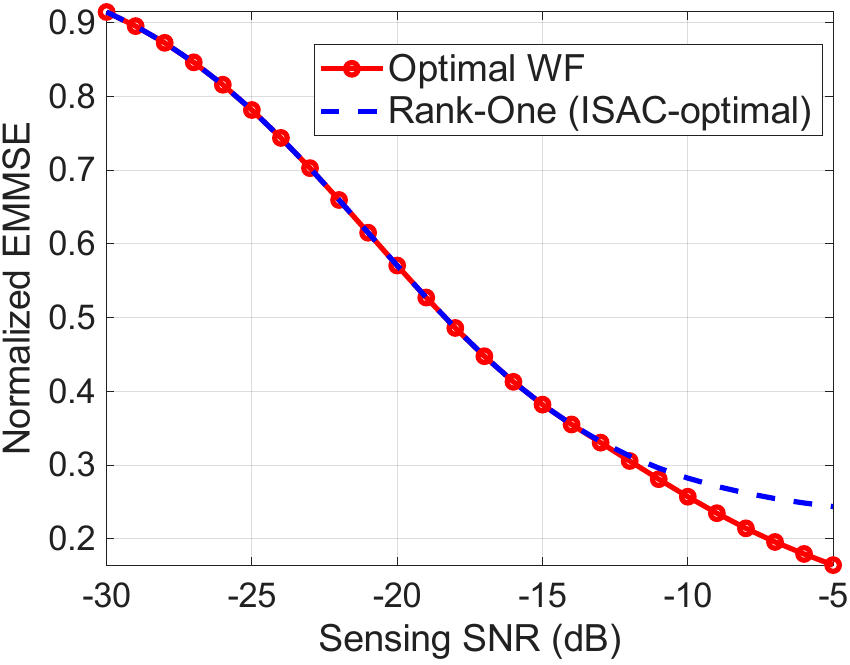} 
    \caption{Low-SNR sensing performance under different  power allocation strategies.}
    \label{fig:low_snr_sensing}
  \end{minipage}
  \vspace{-6mm}
\end{figure}

\textit{Simulation Results:}
We evaluate the proposed framework with $M = N_{\rm c} = N_{\rm s} = 8$, $T = 20$, $\SNR_{\rm c} = 20$~dB, and $\SNR_{\rm s} = 8$~dB, \RED{which accounts for the fact that radar sensing typically suffers from severe two-way path loss compared to the one-way communication link.}
According to Remark~\ref{rem1}, when 
the sensing channel is isotropic, the \ac{snc} tradeoff
vanishes.
To explicitly demonstrate the non-trivial tradeoff behavior, we therefore
consider a non-isotropic sensing channel model with
$[\bm{R}_{\bm{\Hsf}_{\rm s}}]_{ij} = N_{\rm s}\rho^{|i-j|}$, where $\rho = 0.9$.
We plot the normalized EMMSE, defined as
$J_{\EMMSE} / \mathrm{tr}(\bm{R}_{\bm{\Hsf}_{\rm s}})$,
versus the communication rate
$T^{-1} I_{\mathrm{lb}} = C_{\rm pure} - \Delta$.
As illustrated in Fig.~\ref{fig:tradeoff}, there exists a clear Pareto frontier representing the tradeoff between \ac{snc}. Specifically, when $\alpha \to 1$, the system prioritizes the communication objective, driving the rate $T^{-1}I_{\mathrm{lb}}$ to its maximum value $C_{\rm pure}$, albeit at the cost of a significantly increased sensing metric. Conversely, as $\alpha$ decreases towards $0$, the optimization shifts its focus toward sensing, resulting in a minimized EMMSE as the power allocation $\hat{\bm{A}}$ is optimized to align with the eigenvalues of $\bm{R}_{\bm \Hsf_{\rm s}}$.

\vspace{-0.3em}
\subsection{Asymptotic Analysis in the Low-SNR Regime}
\label{sec:low_snr}


\subsubsection{Power-Limited Communication and Flash Signaling}
Under the noncoherent block-fading model in Sec.~\ref{sec:model}, the low-SNR capacity
admits the first-order expansion \cite{lizhong_zheng_communication_2002,verdu_spectral_2002}
\begin{equation}
   C(\SNR_{\rm c}) = N_{\rm c}\log_2(e)\,\SNR_{\rm c} + o(\SNR_{\rm c}),
   \label{eq:low_snr_slope}
\end{equation}
which coincides with the coherent benchmark to first order. Hence, to a first order, there is no capacity penalty for not knowing the channel at the receiver in the low-SNR
regime, unlike in the high-SNR regime, where the noncoherent channel is degree of freedom
limited.

Moreover, the first-order gain from multiple antennas comes from the increase in total
received power due to multiple receive antennas (the linear factor $N_{\rm c}$), whereas
multiple transmit antennas provide no first-order improvement.
The first-order term in \eqref{eq:low_snr_slope} can be asymptotically achieved by
\emph{flash (temporally peaky) signaling}: allocating essentially all transmit energy in each
coherence interval to a single symbol time and a single transmit spatial dimension (e.g.,
one transmit antenna/one beamforming direction), while the receiver adds up the received
energies from the $N_{\rm c}$ antennas noncoherently.

\subsubsection{Low-SNR Sensing Behavior}
As $\SNR_{\rm s}\to 0$ (i.e., $\kappa \to 0$), the $J_{\EMMSE}$ in \eqref{def:emmse} can be approximated as
\begin{align}
   J_{\EMMSE}&\stackrel{(a)}{=} \mathbb{E}_{\bm{\Xsf}} \left[ 
        \tr \left( 
            \bm{R}_{\bm{\Hsf}_{\rm s}} 
            - \kappa \bm{R}_{\bm{\Hsf}_{\rm s}} 
                \bm{\Xsf}\bm{\Xsf}^\mathsf{H} 
                \bm{R}_{\bm{\Hsf}_{\rm s}} 
        \right) 
    \right] 
    + o(\kappa) \nonumber \\
    &= \tr(\bm{R}_{\bm{\Hsf}_{\rm s}}) 
    - \kappa \tr \big( \bm{R}_{\bm{\Hsf}_{\rm s}}^2 \bm{\varSigma} \big) 
    + o(\kappa), \label{eq:low_snr_expansion_final}
\end{align}
where $\bm\varSigma=\E[\bm \Xsf \bm \Xsf^\herm]$, and (a) follows from the first-order matrix inverse expansion
\(
    (\bm{A}^{-1} + \kappa \bm{B})^{-1} 
    = \bm{A} - \kappa \bm{A}\bm{B}\bm{A} + o(\kappa)
\)
for small $\kappa$ \cite{horn2012matrix}. 
Minimizing $J_{\EMMSE}$ is  equivalent to solving
\begin{equation}
    \max_{\bm{\varSigma} \succeq \bm{0}}\, \tr(\bm{R}_{\bm{\Hsf}_{\rm s}}^2 \bm{\varSigma}), \quad \text{s.t. } \tr(\bm{\varSigma}) = MT. \label{eq:low_snr_opt_prob}
\end{equation}

\begin{thm}
\label{thm:low_snr_sensing}
In the low-SNR regime, a sensing-optimal $\bm{\varSigma}^\star$ is rank-one and given by $\bm{\varSigma}^\star = MT \bm{u}_{\max} \bm{u}_{\max}^\mathsf{H}$, where $\bm{u}_{\max}$ is the dominant eigenvector of $\bm{R}_{\bm{\Hsf}_{\rm s}}$.
\end{thm}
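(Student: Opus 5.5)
The plan is to solve the linear program \eqref{eq:low_snr_opt_prob} directly, using the spectral decomposition $\bm{R}_{\bm{\Hsf}_{\rm s}} = \bm{U}_{\rm s}\bm{\varLambda}_{\rm s}\bm{U}_{\rm s}^\herm$ from Proposition~\ref{sensing_optimal}. Order the eigenvalues as $\lambda_1 \ge \lambda_2 \ge \dots \ge \lambda_M > 0$, so that $\bm{u}_{\max}$ is the first column of $\bm{U}_{\rm s}$. Because $\bm{R}_{\bm{\Hsf}_{\rm s}}$ is Hermitian positive definite, $\bm{R}_{\bm{\Hsf}_{\rm s}}^2 = \bm{U}_{\rm s}\bm{\varLambda}_{\rm s}^2\bm{U}_{\rm s}^\herm$. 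It has the same eigenvectors, and its largest eigenvalue $\lambda_1^2$ is attained at $\bm{u}_{\max}$.

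The first step is the upper bound. Set $\tilde{\bm{\varSigma}} = \bm{U}_{\rm s}^\herm \bm{\varSigma}\bm{U}_{\rm s}$. This matrix is still positive semidefinite and satisfies $\tr(\tilde{\bm{\varSigma}}) = MT$, so its diagonal entries satisfy $[\tilde{\bm{\varSigma}}]_{ii}\ge 0$ and sum to $MT$. Then
\begin{equation*}
\tr(\bm{R}_{\bm{\Hsf}_{\rm s}}^2\bm{\varSigma}) = \tr(\bm{\varLambda}_{\rm s}^2\tilde{\bm{\varSigma}}) = \sum_{i=1}^M \lambda_i^2 [\tilde{\bm{\varSigma}}]_{ii} \le \lambda_1^2 \sum_{i=1}^M [\tilde{\bm{\varSigma}}]_{ii} = MT\lambda_1^2 .
\end{equation*}
Equivalently, this is the bound $\tr(\bm{B}\bm{\varSigma}) \le \lambda_{\max}(\bm{B})\tr(\bm{\varSigma})$ for $\bm{\varSigma}\succeq\bm{0}$, which is von Neumann's trace inequality.

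The second step is achievability. For $\bm{\varSigma}^\star = MT\bm{u}_{\max}\bm{u}_{\max}^\herm$ we have $\bm{\varSigma}^\star \succeq \bm{0}$ and $\tr(\bm{\varSigma}^\star) = MT$, so it is feasible. It also gives $\tr(\bm{R}_{\bm{\Hsf}_{\rm s}}^2\bm{\varSigma}^\star) = MT\,\bm{u}_{\max}^\herm \bm{R}_{\bm{\Hsf}_{\rm s}}^2\bm{u}_{\max} = MT\lambda_1^2$, which meets the bound. Hence $\bm{\varSigma}^\star$ is optimal for \eqref{eq:low_snr_opt_prob}. By the expansion \eqref{eq:low_snr_expansion_final}, it therefore minimizes the first-order term of $J_{\EMMSE}$, giving $J_{\EMMSE} = \tr(\bm{R}_{\bm{\Hsf}_{\rm s}}) - \kappa MT\lambda_1^2 + o(\kappa)$.

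Two points need care; neither is a real obstacle.
\begin{itemize}
\item \emph{Why the statement says ``a'' sensing-optimal $\bm{\varSigma}^\star$.} If $\lambda_1$ is repeated, any $\bm{\varSigma}$ supported on the dominant eigenspace with trace $MT$ is also optimal. The equality condition in the first step, namely $[\tilde{\bm{\varSigma}}]_{ii}=0$ whenever $\lambda_i<\lambda_1$, characterizes this set, and the rank-one choice is one member of it. If $\lambda_1$ is simple, $\bm{\varSigma}^\star$ is unique.
\item \emph{Whether optimizing the first-order surrogate is justified.} It is, because the $o(\kappa)$ remainder is uniform over the compact feasible set of power-constrained covariances. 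This follows from the Neumann-series bound on $(\bm{R}_{\bm{\Hsf}_{\rm s}}^{-1}+\kappa\bm{B})^{-1}$, whose remainder is of order $\kappa^2\|\bm{B}\|^2$ with $\|\bm{B}\|\le MT$.
\end{itemize}
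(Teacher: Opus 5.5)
Your main argument is correct and is essentially the paper's proof. The paper bounds $\tr(\bm{R}_{\bm{\Hsf}_{\rm s}}^2\bm{\varSigma})\le MT\lambda_{\max}^2(\bm{R}_{\bm{\Hsf}_{\rm s}})$ by von Neumann's trace inequality and notes that $MT\bm{u}_{\max}\bm{u}_{\max}^\herm$ attains it. Your argument via the nonnegative diagonal of $\bm{U}_{\rm s}^\herm\bm{\varSigma}\bm{U}_{\rm s}$ gives the same bound more elementarily, and it also yields the equality (uniqueness) characterization directly.

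Your second bullet, however, is false as written and should be replaced. The constraint \eqref{power_constraint} holds only on average. Moreover, $J_{\EMMSE}$ in \eqref{def:emmse} depends on the whole distribution of $\bm{B}\triangleq\bm{\Xsf}\bm{\Xsf}^\herm$, not just on $\bm{\varSigma}$, so compactness of the set of covariances is not the relevant property. The bound $\|\bm{B}\|\le MT$ holds only under a peak (almost-sure) constraint, and the flash input of Corollary~\ref{cor:signal_structure} has $\|\bm{B}\|=MT|\xi|^2$ unbounded.

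The remainder really is non-uniform. Take $|\xi|^2=1/\epsilon$ with probability $\epsilon$ and $0$ otherwise. Sherman--Morrison gives $J_{\EMMSE}=\tr(\bm{R}_{\bm{\Hsf}_{\rm s}})-\kappa MT\lambda_1^2/(1+\kappa MT\lambda_1/\epsilon)$. Its gap to the first-order expression tends to $\kappa MT\lambda_1^2$ as $\epsilon\to0$, so it is not $O(\kappa^2)$ uniformly over feasible inputs.

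If you want a rigorous justification of the surrogate \eqref{eq:low_snr_opt_prob}, use the following exact identity, writing $\bm{R}\triangleq\bm{R}_{\bm{\Hsf}_{\rm s}}$:
\[
(\bm{R}^{-1}+\kappa\bm{B})^{-1}=\bm{R}-\kappa\bm{R}\bm{B}\bm{R}+\kappa^2\bm{R}\bm{B}(\bm{R}^{-1}+\kappa\bm{B})^{-1}\bm{B}\bm{R}.
\]
The last term is positive semidefinite. Taking trace and expectation gives $J_{\EMMSE}\ge\tr(\bm{R})-\kappa\tr(\bm{R}^2\bm{\varSigma})\ge\tr(\bm{R})-\kappa MT\lambda_1^2$ for \emph{every} feasible input and every $\kappa$, with no uniformity needed. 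The deterministic input with $\bm{B}=MT\bm{u}_{\max}\bm{u}_{\max}^\herm$ achieves $\tr(\bm{R})-\kappa MT\lambda_1^2/(1+\kappa MT\lambda_1)$, which is within $O(\kappa^2)$ of this bound.

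This goes beyond what the paper proves, since it simply optimizes the surrogate. It therefore does not affect your proof of the theorem as stated.
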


\begin{proof}
By von Neumann's trace inequality \cite{horn2012matrix}, 
$
\tr(\bm{R}_{\bm{\Hsf}_{\rm s}}^2 \bm{\varSigma}) \le \sum_{i=1}^M \lambda_i(\bm{R}_{\bm{\Hsf}_{\rm s}}^2) \lambda_i(\bm{\varSigma}) \le MT \cdot \lambda_{\max}^2(\bm{R}_{\bm{\Hsf}_{\rm s}}),
$
where $\lambda_i(\cdot)$ denotes eigenvalues in nonincreasing order. The upper bound is achieved by the rank-one $\bm{\varSigma}^\star = MT \bm{u}_{\max} \bm{u}_{\max}^\mathsf{H}$.
\end{proof}
\begin{cor}[First-Order Optimal ISAC Signal Structure]
\label{cor:signal_structure}
A first-order optimal ISAC waveform can be chosen to be rank-one in space and flash in time, e.g.,
\begin{equation}\label{signal_structure}
    \bm{\Xsf}_{\rm opt} = \sqrt{MT}\,\bm{u}_{\max}\,\xi\,\bm{e}_1^{\Tsf},
\end{equation}
where $\bm{e}_1\in\mathbb{C}^{T}$ is the first  basis vector and $\xi\in\mathbb{C}$ is a random scalar
satisfying $\E[|\xi|^2]=1$ with a peaky/flash distribution (zero most of the time and very large with very small probability).
\end{cor}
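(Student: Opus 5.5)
The corollary asks for a single waveform that is first-order optimal for both tasks at once. I will check two things separately: that the proposed $\bm{\Xsf}_{\rm opt}$ attains the sensing optimum of Theorem~\ref{thm:low_snr_sensing}, and that it attains the low-SNR communication slope in \eqref{eq:low_snr_slope}. The argument should go through because the first-order sensing metric in \eqref{eq:low_snr_expansion_final} depends on $\bm{\Xsf}$ only through $\bm{\varSigma}=\E[\bm{\Xsf}\bm{\Xsf}^\herm]$. The temporal and amplitude structure of $\bm{\Xsf}$ is therefore free to be chosen for communication, as long as the spatial covariance is the rank-one $\bm{\varSigma}^\star$.

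\textbf{Sensing and power.} I compute
\[
\bm{\Xsf}_{\rm opt}\bm{\Xsf}_{\rm opt}^\herm = MT|\xi|^2\,\bm{u}_{\max}\bm{u}_{\max}^\herm,
\]
using $\bm{e}_1^{\Tsf}\bm{e}_1=1$ and $\|\bm{u}_{\max}\|=1$. Taking expectations with $\E[|\xi|^2]=1$ gives $\bm{\varSigma}=MT\,\bm{u}_{\max}\bm{u}_{\max}^\herm=\bm{\varSigma}^\star$. This also gives $\E[\tr(\bm{\Xsf}\bm{\Xsf}^\herm)]=MT$, so the power constraint \eqref{power_constraint} is met. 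Substituting into \eqref{eq:low_snr_expansion_final} shows that $J_{\EMMSE}$ equals $\tr(\bm{R}_{\bm{\Hsf}_{\rm s}})-\kappa MT\lambda_{\max}^2+o(\kappa)$, which is the optimal first-order value by Theorem~\ref{thm:low_snr_sensing}. One caveat must be handled here: the $o(\kappa)$ remainder has to be uniform over the random $\bm{\Xsf}$. I would show this from the exact identity
\[
\bigl(\bm{R}^{-1}+\kappa\bm{B}\bigr)^{-1}=\bm{R}-\kappa\bm{R}\bm{B}\bm{R}+\kappa^2\bm{R}\bm{B}\bigl(\bm{R}^{-1}+\kappa\bm{B}\bigr)^{-1}\bm{B}\bm{R}.
\]
For large $|\xi|$ I bound the trace of the remainder by a quantity that grows at most linearly in $|\xi|^2$, or alternatively let the flash parameters scale with $\kappa$ in a controlled way.

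\textbf{Communication.} By Lemma~\ref{id}, $\bm{\Hsf}_{\rm c}\bm{u}_{\max}\sim\mathcal{CN}(\bm{0},\bm{I}_{N_{\rm c}})$. The channel from $\xi$ to Rx1 therefore reduces, in the first column, to a noncoherent SIMO Rayleigh channel
\[
\bm{y}=\sqrt{MT}\,\bm{h}\,\xi+\bm{z},
\]
whose per-symbol received SNR is $T\cdot\SNR_{\rm c}$ summed over $N_{\rm c}$ antennas. I would then take $\xi$ to be on--off flash signaling, $|\xi|^2=1/\delta$ with probability $\delta$ and zero otherwise, optionally with PPM-type positions across blocks. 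The receiver uses noncoherent energy detection $\|\bm{y}\|^2$. The known wideband and low-SNR results of \cite{verdu_spectral_2002,lizhong_zheng_communication_2002} then give $I/T\to N_{\rm c}\log_2(e)\,\SNR_{\rm c}$ to first order as $\delta\to 0$ appropriately with $\SNR_{\rm c}$. Combined with the converse \eqref{eq:low_snr_slope}, which also upper-bounds any input, this establishes first-order optimality.

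\textbf{Main obstacle.} The main difficulty is the communication step. I need to verify that flash signaling in a single column and a single spatial direction really achieves the full $N_{\rm c}\log_2 e$ slope. I would do this by a direct lower bound on $I(\xi;\bm{y})$ via a binary-hypothesis energy-detection argument, rather than relying only on the cited results. I also need to reconcile the peakiness required for communication with the uniform-in-$\xi$ remainder in the sensing expansion. I would try to resolve this by choosing $\delta=\delta(\SNR)$ so that $\kappa/\delta\to 0$ while communication still achieves its slope.
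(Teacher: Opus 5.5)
\textbf{Verdict.} Your route matches the paper's. Its only justification is the remark after the corollary: plug $\E[\bm{\Xsf}_{\rm opt}\bm{\Xsf}_{\rm opt}^{\herm}]=\bm{\varSigma}^\star$ into \eqref{eq:low_snr_expansion_final}, and invoke flash signaling in one isotropic direction for \eqref{eq:low_snr_slope}. You were right to flag the final obstacle. It is not a technicality, though, and your proposed fix fails without an extra hypothesis, so the gap lies in the statement itself.

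\textbf{The two scaling requirements.} For this waveform the EMMSE is exact. Since $\bm{u}_{\max}$ is an eigenvector of $\bm{R}_{\bm{\Hsf}_{\rm s}}$, with $g=|\xi|^2$ and $k=\kappa MT\lambda_{\max}$,
\[
J_{\EMMSE}=\tr(\bm{R}_{\bm{\Hsf}_{\rm s}})-\lambda_{\max}\,\E\left[\frac{kg}{1+kg}\right].
\]
For on--off $\xi$ ($g=1/\delta$ with probability $\delta$) this equals $\tr(\bm{R}_{\bm{\Hsf}_{\rm s}})-\kappa MT\lambda_{\max}^2/(1+k/\delta)$, which is first-order optimal iff $\kappa/\delta\to0$. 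On the communication side, columns $2,\dots,T$ of $\bm{\Ysf}_{\rm c}$ are pure noise. With $s=T\,\SNR_{\rm c}$,
\[
I(\xi;\bm{\Ysf}_{\rm c})\le N_{\rm c}\bigl[\log(1+s)-\E\log(1+sg)\bigr]\ \text{nats},
\]
so reaching $N_{\rm c}s(1-o(1))$ forces $\E\log(1+sg)=o(s)$. For on--off this means $\delta\log(1+s/\delta)=o(s)$, i.e.\ $\delta=o(\SNR_{\rm c})$. You therefore need $\kappa\ll\delta\ll\SNR_{\rm c}$, which is possible only if $\SNR_{\rm s}=o(\SNR_{\rm c})$. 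Your other option, a remainder bound linear in $|\xi|^2$, gives $o(\kappa)$ by dominated convergence only for a \emph{fixed} law of $\xi$. For a fixed law, the display above gives $I(\xi;\bm{\Ysf}_{\rm c})=o(\SNR_{\rm c})$, i.e.\ zero communication slope. The noncoherent penalty is avoided only by peakiness that grows as $\SNR_{\rm c}\to0$.

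\textbf{No law of $\xi$ helps when the SNRs are comparable.} Use the size-biased law $\mu(dg)=g\,P_g(dg)$, a probability measure since $\E[g]=1$. Communication requires $\int\frac{\log(1+sg)}{sg}\,\mu(dg)\to0$, and sensing requires $\int\frac{kg}{1+kg}\,\mu(dg)\to0$. If $\SNR_{\rm s}/\SNR_{\rm c}$ stays bounded away from zero, then $k\ge cs$ for some $c>0$. With $x=sg$, the sum of the two integrands is then at least $\frac{\log(1+x)}{x}+\frac{cx}{1+cx}$, which is bounded below by a positive constant on $(0,\infty)$. So both integrals cannot vanish.

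\textbf{Consequence.} The corollary, and the paper's remark, need an added hypothesis such as $\SNR_{\rm s}=o(\SNR_{\rm c})$. The remark applies the fixed-law expansion \eqref{eq:low_snr_expansion_final} to a law whose peakiness must grow as $\SNR_{\rm c}\to0$. The alternative is decoupled limits: $\kappa\to0$ at a fixed law of $\xi$, with the flash limit taken afterwards. Under $\SNR_{\rm s}=o(\SNR_{\rm c})$ your plan goes through with $\delta=T\,\SNR_{\rm c}/A$ and $A\to\infty$ slowly enough that also $\kappa A/\SNR_{\rm c}\to0$.

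\textbf{Caution on the lower bound.} Keep the soft statistic, e.g.\ $I=\delta D(P_1\|P_0)-D(P_{\bm y}\|P_0)$, with $P_1,P_0$ the laws of the first column with the pulse on and off. A per-block threshold test on $\|\bm y\|^2$ caps the slope at a constant fraction of $N_{\rm c}\log_2 e$ (at most $1/e$ of it when $N_{\rm c}=1$).
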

\begin{rem}[Low-SNR Tradeoff Vanishing]
Corollary~\ref{cor:signal_structure} satisfies $\E[\bm{\Xsf}_{\rm opt}\bm{\Xsf}_{\rm opt}^{\herm}]
= MT\,\bm{u}_{\max}\bm{u}_{\max}^{\herm}$ and is therefore sensing-optimal to first order by
\eqref{eq:low_snr_expansion_final}. Meanwhile, it follows that allocating
all transmit energy in each coherence interval to a single symbol and a single transmit spatial dimension is sufficient
to asymptotically achieve the first-order term in \eqref{eq:low_snr_slope}. Since the channel is isotropic,
choosing the direction as $\bm{u}_{\max}$ incurs no first-order communication loss. Hence, the \ac{snc}
tradeoff vanishes asymptotically to first order at low SNR.
\end{rem}

\textit{Simulation Results:}
We evaluate the normalized EMMSE in the low-SNR regime, to illustrate the accuracy of the  asymptotic analysis.
The system parameters are set to $M = N_{\rm s} = 8$ and $T = 20$, 
with the same $\bm{R}_{\bm{\Hsf}_{\rm s}}$ as in Fig.~\ref{fig:tradeoff}.
Specifically, Fig.~\ref{fig:low_snr_sensing} compares two spatial power allocation strategies:
(i) the sensing-optimal water-filling solution in Proposition~\ref{sensing_optimal}, and
(ii) the rank-one signaling strategy predicted by  Corollary~\ref{cor:signal_structure}.
As $\SNR_{\rm s}$ decreases, the performance gap between the water-filling solution and the rank-one strategy gradually vanishes.
This convergence empirically confirms that the rank-one spatial structure becomes asymptotically optimal in the low-SNR regime.
The communication performance is not explicitly plotted because, to first order as $\SNR_{\rm c}\to 0$, the achievable rate is governed by the total transmit energy.
Under the same block-energy constraint and the same temporally peaky/flash signaling pattern, different spatial covariances do not affect the first-order term in the low-SNR expansion. 

\section{Conclusion and Discussion}
This paper provided a new perspective on the fundamental limits of noncoherent \ac{isac} systems. Contrary to the conventional intuition that communication and sensing are inherently conflicting, our analysis revealed that their structural alignment is highly \ac{snr}-dependent. Specifically, the identified vanishing tradeoff phenomenon---occurring not only in the low-\ac{snr} regime but also under specific high-\ac{snr} configurations---suggests that joint signaling can achieve near-optimal performance for both functionalities simultaneously. This convergence implies that the perceived mismatch between the \ac{ustm} signaling and sensing-optimal power allocation is a flexible constraint rather than a fundamental barrier. 
\section*{Acknowledgement}

 The work of H. Yang and K. Wan was funded by NSFC-12141107 and Wuhan ``Chen Guang'' Program under Grant 2024040801020211, and the work of G. Caire was supported by the Gottfried Wilhelm Leibniz-Preis 2021 of the German Science Foundation (DFG).
\appendices
\section{Proof of Proposition~\ref{sensing_optimal}}
\label{app:sensing_optimal}
Recalling the definition in \eqref{def:emmse}, let
\[
f(\bm{\Rsf}_{\bm{\Xsf}})=
\tr\left((\bm{R}_{\bm{\Hsf}_{\rm s}}^{-1}
+\kappa T\bm{\Rsf}_{\bm{\Xsf}})^{-1}\right).
\]
The mapping $\bm{S}\mapsto
(\bm{R}_{\bm{\Hsf}_{\rm s}}^{-1}+\kappa T\bm{S})^{-1}$ is matrix convex on the positive semidefinite cone, and the trace operator is linear; hence $f(\cdot)$ is convex in $\bm{\Rsf}_{\bm{\Xsf}}\succeq\bm{0}$.
By Jensen's inequality,
\[
\E[f(\bm{\Rsf}_{\bm{\Xsf}})] \ge
f(\E[\bm{\Rsf}_{\bm{\Xsf}}])
\triangleq f(\widetilde{\bm{R}}_{\bm{\Xsf}}),
\]
with equality if $\bm{\Rsf}_{\bm{\Xsf}}$ is deterministic, i.e.,
$\bm{\Rsf}_{\bm{\Xsf}}=\widetilde{\bm{R}}_{\bm{\Xsf}}$ almost surely.
This proves Item 1).

For a deterministic $\bm{\Rsf}_{\bm{\Xsf}}$, the objective
$f(\bm{\Rsf}_{\bm{\Xsf}})$ is minimized when its eigenvectors align with those of $\bm{R}_{\bm{\Hsf}_{\rm s}}^{-1}$, equivalently with those of $\bm{R}_{\bm{\Hsf}_{\rm s}}$, by the standard trace/eigenvalue alignment argument. This yields Item 2).
Substituting the eigen-decompositions into $f(\cdot)$ reduces the problem to the scalar convex optimization
\[
\min_{\{d_i\ge 0\}}
\sum_{i=1}^{M}
\left(\lambda_i^{-1}+\kappa T d_i^2\right)^{-1},
\quad
\text{s.t.}\quad \sum_{i=1}^{M}d_i^2=M.
\]
The KKT conditions give
\[
(d_i^\star)^2
=\frac{1}{\kappa T}\left(\nu-\frac{1}{\lambda_i}\right)^+,
\quad i=1,\ldots,M,
\]
where $\nu$ is chosen such that $\sum_{i=1}^{M}(d_i^\star)^2=M$.
This proves Item 3). Item 4) follows directly when
$\bm{R}_{\bm{\Hsf}_{\rm s}}\propto\bm{I}_M$, in which case the symmetry of the scalar problem gives $d_i^\star=1$ for all $i=1,\ldots,M$.

\section{Proof of Proposition~\ref{prop:opta_convex}}
\label{app:opta_convex}
Using the definition of the sensing-induced rate loss, the term
$\hat{\Delta}(\hat{\bm{A}})$ can be written, up to constants independent of $\hat{\bm{A}}$, as
\[
\hat{\Delta}(\hat{\bm{A}})
=-\left(1-\frac{M}{T}\right)\log\det(\hat{\bm{A}})+\mathrm{const.}
\]
Since $-\log\det(\hat{\bm{A}})$ is convex on the positive definite cone, $\hat{\Delta}(\hat{\bm{A}})$ is convex on its domain.
The sensing objective is separable:
\[
S(\hat{\bm{A}})
=\sum_{i=1}^{M}s_i(\hat{a}_i),
\qquad
s_i(\hat{a}_i)=\left(\lambda_i^{-1}+\kappa\hat{a}_i\right)^{-1}.
\]
For $\hat{a}_i\ge 0$, its second derivative is
\[
s_i''(\hat{a}_i)
=\frac{2\kappa^2}{(\lambda_i^{-1}+\kappa\hat{a}_i)^3}\ge 0,
\]
so each $s_i(\cdot)$ is convex and therefore $S(\hat{\bm{A}})$ is convex.
For any $\alpha\in[0,1]$, $F(\hat{\bm{A}})=\alpha\hat{\Delta}(\hat{\bm{A}})+(1-\alpha)S(\hat{\bm{A}})$ is a nonnegative weighted sum of convex functions.
Finally, the feasible set
\[
\{\hat{\bm{A}}\succeq\bm{0}\mid \tr(\hat{\bm{A}})=MT\}
\]
is convex because it is the intersection of the positive semidefinite cone and an affine hyperplane.
Thus, Problem~\eqref{OPTA1} is convex with respect to $\hat{\bm{A}}$.

 \bibliographystyle{IEEEtran}
\bibliography{IEEEabrv,isit}

\end{document}